\newtheorem{theorem}{Theorem}[section]
\newtheorem{lemma}[theorem]{Lemma}
\newtheorem{cor}[theorem]{Corollary}
\theoremstyle{remark}
\newtheorem{example}[theorem]{Example}
\newcommand{\N}{\mathds{N}}
\newcommand{\Q}{\mathds{Q}}
\newcommand{\esupp}[1]{\Theta_{#1}}
\newcommand{\norm}[1]{\left\lVert#1\right\rVert}
\title{Pretty Good State Transfer of\\Multiple Qubit States on Paths}
\author{Christopher M. van Bommel\footnote{Partially supported by a Canada Graduate Scholarship (Doctoral) from the Natural Sciences and Engineering Research Council of Canada}\\ Department of Combinatorics and Optimization\\ University of Waterloo, ON, Canada\\ \texttt{cvanbomm@uwaterloo.ca}}
\date{\today}
\begin{document}

\maketitle

\begin{abstract}
We discuss pretty good state transfer of multiple qubit states and provide a model for considering state transfer of arbitrary states on unmodulated XX-type spin chains.  We then provide families of paths and initial states for which we can determine whether there is pretty good state transfer based on the eigenvalue support of the initial state.

\vspace{5pt} \noindent Keywords: quantum walks, state transfer, graph eigenvalues 

\vspace{5pt} \noindent Mathematical Subject Classification: 81P68, 15A16, 05C50
\end{abstract}

\section{Introduction}

In quantum information processing, a key requirement is the ability to transfer quantum states from one location to another.  Perhaps an obvious method of doing so is via a series of SWAP gates; however, such a procedure would require a great deal of control over the system and would be highly prone to errors, as analyzed by Petrosyan, Nikolopoulos, and Lambropoulos~\cite{PNL10}.  Instead, one could take advantage of the natural propagation of the system as time passes to transfer quantum states.  The protocol for quantum communication through unmeasured and unmodulated spin chains was presented by Bose~\cite{B03}, and led to the interpretation of quantum channels implemented by spin chains as wires for transmission of states.

In the ideal scenario, the fidelity of this state transfer is equal to one, and we say we have \emph{perfect state transfer}.  This concept of perfect state transfer was introduced by Christandl et al.~\cite{CDDEKL05}, who showed that perfect state transfer on uniformly coupled spin chains is only possible for chains of two or three qubits.  If non-uniform coupling schemes are considered, then perfect state transfer can be realized on spin chains of arbitrary length, as demonstrated by Christandl et al.~\cite{CDDEKL05} and Yung and Bose~\cite{YB05}, however, engineering such a scheme would be highly difficult in practice.

Hence, it would be desirable to demonstrate the achievement of quantum state transfer in spin chains where there is little variation among the coupling strengths.  Motivated by this desire, Karbach and Stolze~\cite{KS05} demonstrated perfect state transfer using weakly varying coupling configurations.  Multiple authors have also considered modifying the coupling strengths only of the couplings near the ends of the spin chain~\cite{Banchi2013,BACVV11,BBVB11,Wojcik2005}.  Other alternatives considered include iterative measurement procedures \cite{Bayat14,BB05} and initializing the channel qubits to a specific state~\cite{BBBV11}.

On the other hand, it is worth investigating whether relaxing the requirement on the fidelity would lead to better implementation.  Such a question is motivated by arguments that these implementations are too demanding compared to the level of fidelity required for most tasks in quantum information processing~\cite{Zenchuk2012}.  Hence, pretty good state transfer has been introduced by multiple authors, including Vinet and Zhedanov~\cite{VZ12} (under the term almost perfect state transfer) and Godsil~\cite{Godsil2012}, which requires there to be times where the fidelity of transfer is arbitrarily close to one.  Godsil, Kirkland, Servini, and Smith~\cite{GKSS12} considered pretty good state transfer on unmodulated XX-type spin chains, or from a graph theoretic standpoint unweighted paths with respect to their adjacency matrices, and proved the following.  We let $P_n$ denote a path on $n$ vertices, and assume without loss of generality that the vertices of $P_n$ are labelled 1 to $n$ such that vertices with consecutive labels are adjacent.

\begin{theorem} \cite{GKSS12}
Pretty good state transfer occurs between the end vertices of $P_n$ if and only if $n = p - 1$, $2p - 1$, where $p$ is a prime, or $n = 2^m - 1$.  Moreover, when pretty good state transfer occurs between the end vertices of $P_n$, then it occurs between vertices $a$ and $n + 1 - a$ for all $a \neq (n + 1) / 2$.
\end{theorem}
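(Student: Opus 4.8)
The plan is to reduce the statement to a number-theoretic condition on integer linear relations among the eigenvalues of $P_n$, and then to settle that condition using the theory of vanishing sums of roots of unity. Write $N = n+1$. The adjacency matrix $A$ of $P_n$ has simple eigenvalues $\theta_r = 2\cos(\pi r/N)$ for $r = 1, \dots, n$, with orthonormal eigenvectors whose entries are proportional to $\sin(\pi r j/N)$. The first step is to record the reflection identity $\sin(\pi r(N-j)/N) = (-1)^{r+1}\sin(\pi r j/N)$, which shows that the spectral idempotents $E_r$ satisfy $E_r e_{N-j} = (-1)^{r+1} E_r e_j$. Because the spectrum is simple, this is precisely strong cospectrality of the vertices $a$ and $N-a$, with sign pattern $\sigma_r = (-1)^{r+1}$; in particular the end vertices $1$ and $n$ are strongly cospectral with these signs. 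I would also note $\theta_{N-r} = -\theta_r$, so the eigenvalues occur in $\pm$ pairs, with a zero eigenvalue $\theta_{N/2}=0$ exactly when $N$ is even.

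Next I would set up the fidelity. Writing $U(t) = e^{itA} = \sum_r e^{it\theta_r} E_r$, the amplitude $U(t)_{1,n} = \sum_r (-1)^{r+1} e^{it\theta_r} v_r(1)^2$ has modulus at most $\sum_r v_r(1)^2 = 1$, with equality in the limit precisely when the phases $(-1)^{r+1} e^{it\theta_r}$ can be made simultaneously arbitrarily close to a single common unit. Applying Kronecker's theorem to the two-parameter family $(t,\phi) \mapsto (t\theta_r + \phi)_r$ on the torus, and using that the negatively signed indices are exactly the even $r$, pretty good state transfer between the end vertices occurs if and only if every integer tuple $(\ell_r)$ obeys
\[
\sum_r \ell_r \theta_r = 0 \ \text{ and }\ \sum_r \ell_r = 0 \ \implies\ \sum_{r \text{ even}} \ell_r \equiv 0 \pmod 2 .
\]

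The heart of the proof, and the main obstacle, is to classify the homogeneous integer relations among $\{\theta_r\}$ for each $N$ and to test this parity condition. The substitution $\theta_r = \zeta^r + \zeta^{-r}$, with $\zeta = e^{i\pi/N}$ a primitive $2N$-th root of unity, converts each relation into a vanishing integer combination of $2N$-th roots of unity, so that the classification of vanishing sums applies: the relation lattice is generated by the cyclotomic blocks $\sum_{i=0}^{q-1}\zeta^{k + i(2N/q)} = 0$ over primes $q \mid 2N$. When $N = 2^m$ the only prime is $2$, the set $\{1,\zeta,\dots,\zeta^{N-1}\}$ is a power basis, and one checks directly that every relation forces $\ell_r = \ell_{N-r}$; together with $\sum_r \ell_r = 0$ this makes $\ell_{N/2}$ even, so the parity sum is even and transfer occurs. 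When $N = p$ or $N = 2p$, careful bookkeeping shows that each odd cyclotomic block either meets a distinguished root $\pm 1$ (and so contributes an inhomogeneous relation through a rational eigenvalue) or, after symmetrizing via $\theta_{N-r} = -\theta_r$ and the vanishing eigenvalue $\theta_{N/2}=0$, collapses; the surviving homogeneous relations with $\sum_r \ell_r = 0$ again have even parity sum. For every other $N$ there is an odd prime $q \mid N$ with $q < N$ whose blocks can avoid $\pm 1$, producing an ``all-odd-index'' relation and an ``all-even-index'' relation whose difference satisfies $\sum_r \ell_r = 0$ yet has odd parity sum; for example, when $N = 9$ the relations $\theta_1 + \theta_5 + \theta_7 = 0$ and $\theta_2 + \theta_4 + \theta_8 = 0$ combine to such a violation. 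The delicate point is exactly to show that the odd- and even-indexed blocks decouple into a homogeneous parity-breaking relation precisely when $N \notin \{p, 2p, 2^m\}$, and never otherwise.

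Finally, the ``moreover'' statement follows with little extra work. For an interior symmetric pair $a$ and $N-a$, strong cospectrality with signs $\sigma_r = (-1)^{r+1}$ already holds, and the eigenvalue support $\esupp{a} = \{\,r : N \nmid ra\,\}$ is contained in the full support $\{1,\dots,n\}$ of the end vertices. Padding any relation on $\esupp{a}$ with zero coefficients preserves $\sum_r \ell_r$, $\sum_r \ell_r \theta_r$, and the even-index parity sum, so a parity violation on $\esupp{a}$ would give one on the full support. Hence end-vertex transfer, which forces the parity condition on the full support, forces it on every sub-support, yielding transfer between each $a$ and $N-a$. The vertex $a = (n+1)/2 = N/2$ is excluded because it is its own mirror image under the reflection, so no genuine transferring pair exists there.
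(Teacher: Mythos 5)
First, a point of reference: the paper does not prove this theorem --- it is quoted from Godsil, Kirkland, Severini, and Smith --- so there is no internal proof to compare against. Your framework is nonetheless exactly the machinery the paper develops for its own results: reducing transfer between the strongly cospectral pair $1$, $n$ to the condition that $\sum_r \ell_r\theta_r = 0$ and $\sum_r \ell_r = 0$ force $\sum_{r\ \mathrm{even}}\ell_r$ to be even is the contrapositive of the Banchi--Coutinho--Godsil--Severini criterion (restated as Theorem~\ref{thm:KT-path}), and passing to $\theta_r = \zeta_{2N}^r + \zeta_{2N}^{-r}$ and vanishing sums of roots of unity is the same device as Lemmas~\ref{lem:cyclo} and~\ref{lem:trig}. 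Your $N = 2^m$ case is complete and correct (the power basis forces $\ell_r = \ell_{N-r}$), your $N = 9$ relation pair is a genuine parity violation, and the ``moreover'' clause via padding a relation supported on $\esupp{a}\subseteq\{1,\dots,n\}$ with zeros is a clean and valid argument.

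The gap is that the heart of the theorem is asserted rather than proved. For $N = p$ and $N = 2p$ you write that ``careful bookkeeping shows'' the surviving homogeneous relations with $\sum_r\ell_r = 0$ have even parity sum, and for all other $N$ that the odd- and even-indexed blocks ``decouple'' into a parity-breaking relation; but these two claims together \emph{are} the characterization, and neither is carried out. Concretely, you would need (i) an explicit generating set for the lattice $\{\ell : \sum_r\ell_r\theta_r = 0,\ \sum_r\ell_r = 0\}$ for each $N$, obtained by pushing the structure theorem for vanishing sums of $2N$-th roots of unity through the folding $\zeta^r\mapsto\zeta^r+\zeta^{-r}$ --- which is delicate, because a prime block $\sum_{j}(-1)^j\theta_{c+jN/q}$ can pass through the rational values $\theta_0 = 2$ and $\theta_{N/2} = 0$, and whether its indices have constant parity depends on the parity of $N/q$ (your own $N=9$ blocks have mixed parity before symmetrizing by $\theta_{N-r}=-\theta_r$); and (ii) a verification of the parity functional on those generators showing it vanishes exactly when $N\in\{p,2p,2^m\}$. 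For instance, for $N=2p$ the only residue $c$ with $1\le c< N/p=2$ is $c=1$, so the would-be even-index block degenerates into an inhomogeneous relation through $\theta_0=2$, and one must show no integer combination of generators yields a homogeneous parity-violating relation; nothing in the write-up does this. Until (i) and (ii) are supplied, the ``if and only if'' is a plausible plan rather than a proof.
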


Coutinho, Guo, and van Bommel~\cite{CGvB17} determined an infinite family of paths which exhibit pretty good state transfer between internal vertices but not between the end vertices, and van Bommel~\cite{vB19} completed the characterization by showing there were no additional examples, leading to the following result.

\begin{theorem} \label{thm-pgst-path} \cite{vB19}
There is pretty good state transfer on $P_n$ between vertices $a$ and $b$ if and only if $a + b = n + 1$ and either:
\begin{enumerate}[a)]
\item $n = 2^t - 1$, where $t$ is a positive integer; or, 
\item $n = 2^t p - 1$, where $t$ is a nonnegative integer and $p$ is an odd prime, and $a$ is a multiple of $2^{t - 1}$.
\end{enumerate}
\end{theorem}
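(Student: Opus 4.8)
\section*{Proof proposal}

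The plan is to translate the statement into the standard spectral criterion for pretty good state transfer and then resolve the resulting number-theoretic condition. Recall that the adjacency matrix of $P_n$ has eigenvalues $\lambda_r = 2\cos\!\big(\tfrac{r\pi}{n+1}\big)$ for $r = 1, \dots, n$, with an eigenvector for $\lambda_r$ whose $a$-th entry is proportional to $\sin\!\big(\tfrac{ra\pi}{n+1}\big)$. First I would extract two facts. Since $\sin\!\big(\tfrac{r(n+1-a)\pi}{n+1}\big) = (-1)^{r+1}\sin\!\big(\tfrac{ra\pi}{n+1}\big)$, the vertices $a$ and $b$ are strongly cospectral exactly when $b = n+1-a$, with $\lambda_r$ carrying sign $\sigma_r = (-1)^{r+1}$; as strong cospectrality is necessary for pretty good state transfer, this already forces $a+b = n+1$ and identifies the negatively signed eigenvalues as precisely those with $r$ even. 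Second, writing $N = n+1$, one has $\lambda_r \in \esupp{a}$ if and only if $N \nmid ra$, which on setting $N' = N/\gcd(a,N)$ is equivalent to $N' \nmid r$; thus the divisibility of $a$ controls exactly which eigenvalues enter the support, and this support is symmetric under $r \mapsto N-r$ because $N' \mid N$.

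Next I would invoke the Kronecker-type characterization of pretty good state transfer between strongly cospectral vertices: transfer occurs if and only if there is no integer tuple $(\ell_r)$, indexed by $\lambda_r \in \esupp{a}$, with
\[
\sum_r \ell_r \lambda_r = 0, \qquad \sum_r \ell_r = 0, \qquad \textstyle\sum_{r \text{ even}} \ell_r \text{ odd.}
\]
The whole problem is therefore to decide, for each $N$ and each admissible $a$, whether such a ``bad'' relation exists. To analyze the linear relations I would pass to roots of unity by writing $\lambda_r = \omega^r + \omega^{-r}$ with $\omega = e^{i\pi/N}$ a primitive $2N$-th root of unity, so that $\sum_r \ell_r \lambda_r = 0$ becomes a vanishing integer combination of $2N$-th roots of unity invariant under $\omega \mapsto \omega^{-1}$. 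A convenient simplification is the reflection $\lambda_{N-r} = -\lambda_r$: introducing $c_r = \ell_r - \ell_{N-r}$ collapses the weighted sum to $\sum_r c_r \lambda_r = 0$ over a reduced index range, while the middle eigenvalue $\lambda_{N/2} = 0$ (present when $N$ is even) contributes nothing to the weighted sum yet a free $\pm 1$ to the parity count.

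The core of the argument is the classification of these vanishing sums, which I would carry out prime by prime according to the factorization $2N = 2^{t+1} m$ with $m$ odd, using that the lattice of integer relations among $2N$-th roots of unity is generated by the rotated minimal relations coming from the primes dividing $2N$. The cases should then emerge as follows. When $N = 2^t$ the only relevant prime is $2$, the reflection is the sole source of dependence, and one checks that $\sum_{r \text{ even}} \ell_r$ is forced even on every relation with $\sum_r \ell_r = 0$, so transfer holds for all admissible $a$. When $N$ carries two distinct odd prime factors or a repeated odd prime factor, the competing odd-prime relations can be superposed to manufacture a relation of odd even-parity, and transfer fails for every symmetric pair. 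In the remaining regime $N = 2^t p$, the single relation from the vanishing sum of $p$-th roots of unity interacts with the $2$-power structure, and whether it can be loaded with odd parity on the even indices depends on which of the binary eigenvalues $2\cos(j\pi/2^{t-1})$, arising at the indices $r = 2pj$, survive the support constraint $N' \nmid r$; excluding all of them requires precisely $N' \mid 2p$, which is exactly the condition $2^{t-1} \mid a$.

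I expect the main obstacle to be this last, delicate bookkeeping, and in particular pinning the threshold at $2^{t-1}$ rather than merely $2$. One must track four things at once---the support constraint $N' \nmid r$, the reflection identification of $r$ with $N-r$, the constraint $\sum_r \ell_r = 0$, and the parity on even indices---and show that a bad relation can be assembled exactly when $2^{t-1} \nmid a$. A clean route is to reduce modulo the cyclotomic relations one prime at a time and then to phase the remaining analysis on the $2$-adic valuation of the surviving indices, so that $\sum_{r \text{ even}} \ell_r \bmod 2$ becomes a single congruence that is either identically forced even (transfer, giving case (a) and the admissible $a$ in case (b)) or achievable odd (no transfer). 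Verifying that these two outcomes partition the pairs $(N,a)$ exactly as in cases (a) and (b)---and reconciling the endpoint $a = 1$ with the earlier result of Godsil, Kirkland, Servini, and Smith for $N \in \{p, 2p, 2^m\}$---is the crux of the proof.
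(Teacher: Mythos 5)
First, note that the paper does not prove this statement: Theorem~\ref{thm-pgst-path} is quoted from~\cite{vB19}, so the only thing to compare you against is the cited proof and the parallel machinery the paper builds in Section~4 for the multi-qubit generalization. Your framework is exactly that machinery. Strong cospectrality forcing $b = n+1-a$ with signs $(-1)^{r+1}$ is Lemma~\ref{lem:path-sc}; your ``no bad relation'' criterion ($\sum_r \ell_r\lambda_r = 0$, $\sum_r \ell_r = 0$, $\sum_{r\text{ even}}\ell_r$ odd) is the Banchi--Coutinho--Godsil--Severini theorem, restated here as Theorem~\ref{thm:KT-path} with $\zeta_j = (1+(-1)^j)/2$; the passage to $2N$-th roots of unity and the generation of relations by rotated prime relations is what Lemma~\ref{lem:cyclo} does by explicit division by $\Phi_{2m}$, and the alternating cosine identity you would use to write down the $p$-relation is Lemma~\ref{lem:trig}. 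The symmetry of $\esupp{a}$ under $r\mapsto N-r$ is the vertex case of Lemma~\ref{lem:parity}. So the route is the standard one and is sound.

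The gap is that the proposal stops exactly where the theorem's content begins. Everything up to ``decide whether a bad relation exists'' is routine; the theorem \emph{is} the assertion that a bad relation supported on $\esupp{a}$ exists precisely when $N=n+1$ is not $2^t$ or $2^tp$ with $2^{t-1}\mid a$, and you flag this as ``the crux'' and ``the main obstacle'' without carrying it out. Two specific pieces are missing. First, for $N$ divisible by $p^2$ or by two distinct odd primes, you assert that competing rotations of the prime relations ``can be superposed'' to give odd parity, but the relation must live entirely inside $\esupp{a}$, which for an internal vertex excludes all $r$ with $N'\mid r$ where $N' = N/\gcd(a,N)$; showing that a bad relation always survives this excision for \emph{every} $a$ is precisely the contribution of~\cite{vB19} beyond~\cite{GKSS12}, and nothing in the sketch addresses it. Second, for $N = 2^tp$ the mechanism that actually pins the threshold is concrete (and appears in the proof of Theorem~\ref{thm-odd}): the rotated $p$-relation $\sum_{j=0}^{p-1}(-1)^j\theta_{c+j2^t}=0$ has coefficient sum $1$ and even-index parity $0$ or $1$ according as $c$ is odd or even (with $\theta_{N/2}=0$ playing the role of $c'=0$ when $t\ge 2$), so a bad relation is the difference of an even-offset and an odd-offset copy, and PGST holds iff every even-offset copy meets the excluded set $\{r: N'\mid r\}$ --- which is the computation showing $N'\mid 2p$, i.e.\ $2^{t-1}\mid a$. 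Without these two verifications (plus the degenerate cases $t\in\{0,1\}$, where the divisibility condition is vacuous and the middle eigenvalue is absent or odd-indexed), the proposal is a correct plan rather than a proof.
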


Thus far, our discussion has been limited to single-particle qubit states.  However, the usefulness of the implementation of uniformly coupled XX-type spin chains also depends on whether many particle qubit states, and more importantly, entangled states, can be transferred with arbitrarily high fidelity through the chain.    Albanese, Christandl, Datta, and Ekert~\cite{ACDE04} considered implementing a mirror inversion of the spin chain with respect to its centre, and demonstrated two families of coupling strengths exhibiting this property.  Karbach and Stolze~\cite{KS05} generalized these two families to an infinite number of cases exhibiting mirror inversion.  The key observation used in these investigations is that perfect state transfer of the end vertices extends to perfect state transfer of any multiple-particle state.

Sousa and Omar~\cite{SO14} extend the definition of pretty good state transfer to multiple-particle qubit single-excitation states as follows.  They say there is pretty good state transfer of the $m$-qubit state $\ket{\psi_{in}}$ if for any $\epsilon > 0$, there is a time $t > 0$ such that
\[
| \bra{n + 1 - j} U(t) \ket{j} | > 1 - \epsilon, \qquad j = 1, \ldots, m.
\]
Under this definition, they conclude the following result.

\begin{theorem} \cite{SO14}
If the multi-qubit input state $\ket{\psi_{in}}$ is restricted to the single-excitation manifold, then there is pretty good state transfer of $\ket{\psi_{in}}$ if%
\footnote{The authors claim this result to be ``if and only if'', but do not provide a proof of the other direction.}%
$n = p - 1$, $2p - 1$, or $2^k - 1$, where $p$ is a prime and $k \in \N$.
\end{theorem}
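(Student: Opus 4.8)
The plan is to deduce the multi-qubit statement from the single-pair classification of end-vertex pretty good state transfer in \cite{GKSS12}, by showing that the phase-alignment condition which governs transfer between the end vertices is uniform across all transfer pairs and therefore controls every pair $(j,\,n+1-j)$ at once. First I would restrict to the single-excitation subspace, where the evolution is governed by $U(t) = \exp(-\mathrm{i}tA)$ with $A$ the adjacency matrix of $P_n$. Writing the spectral decomposition $A = \sum_{r=1}^{n}\lambda_r E_r$ with $\lambda_r = 2\cos\!\big(\pi r/(n+1)\big)$ and normalized eigenvectors $v_r$ whose entries are proportional to $\sin\!\big(\pi r k/(n+1)\big)$, I would record the reflection symmetry $(v_r)_{n+1-k} = (-1)^{r+1}(v_r)_k$. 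This gives
\[ \bra{n+1-j}U(t)\ket{j} = \sum_{r=1}^{n} w_r^{(j)}\,\phi_r(t), \qquad w_r^{(j)} := (v_r)_j^2 \ge 0, \quad \phi_r(t) := (-1)^{r+1}e^{-\mathrm{i}t\lambda_r}, \]
where $\sum_r w_r^{(j)} = 1$ and $|\phi_r(t)| = 1$. Thus each amplitude is a convex combination of the \emph{common} unit phases $\phi_r(t)$, with weights that depend on $j$ but phases that do not.

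Next I would exploit the structural fact that the end vertex carries full eigenvalue support: since $(v_r)_1 \ne 0$ for every $r$, we have $w_{\min} := \min_r w_r^{(1)} > 0$. Under the hypothesis $n \in \{p-1,\,2p-1,\,2^k-1\}$, the classification in \cite{GKSS12} guarantees pretty good state transfer between the end vertices, and I would convert this high end-vertex fidelity into simultaneous phase alignment. Indeed, if $|\bra{n}U(t)\ket{1}| > 1-\epsilon$, writing the sum as $|S|\,e^{\mathrm{i}\gamma}$ and comparing real parts yields $\sum_r w_r^{(1)}\big(1 - \cos(\arg\phi_r(t) - \gamma)\big) \le \epsilon$, whence $|\phi_r(t) - e^{\mathrm{i}\gamma}| \le \sqrt{2\epsilon/w_{\min}}$ for every $r$. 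So all phases $\phi_r(t)$ cluster within distance $\sqrt{2\epsilon/w_{\min}}$ of a single value $e^{\mathrm{i}\gamma}$, with $\gamma$ and $t$ depending on $\epsilon$.

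I would then propagate this alignment to all remaining pairs. For any $j$, the reverse triangle inequality together with $\sum_r w_r^{(j)} = 1$ gives
\[ \big|\bra{n+1-j}U(t)\ket{j}\big| \ge 1 - \sum_{r} w_r^{(j)}\,\big|\phi_r(t) - e^{\mathrm{i}\gamma}\big| \ge 1 - \max_r \big|\phi_r(t) - e^{\mathrm{i}\gamma}\big| \ge 1 - \sqrt{2\epsilon/w_{\min}}. \]
Since the bound is independent of $j$, all transfer pairs exceed fidelity $1 - \sqrt{2\epsilon/w_{\min}}$ at the \emph{same} time $t$. Given a target tolerance $\epsilon' > 0$, choosing $\epsilon = w_{\min}(\epsilon')^2/2$ and invoking end-vertex pretty good state transfer to produce a time $t$ with end-vertex fidelity above $1-\epsilon$ then yields $|\bra{n+1-j}U(t)\ket{j}| > 1 - \epsilon'$ simultaneously for $j = 1,\ldots,m$, which is precisely pretty good state transfer of $\ket{\psi_{in}}$.

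The main obstacle is conceptual rather than computational: one must recognize that the end-vertex pair is the bottleneck precisely because it carries full eigenvalue support, so that the single scalar phase $\gamma$ extracted from end-vertex transfer doubles as a good common phase for every other pair at the same instant. Once this observation is isolated, the only remaining work is the elementary $\epsilon$-$\delta$ bookkeeping above, and no number theory beyond the classification of \cite{GKSS12} is required for this (``if'') direction.
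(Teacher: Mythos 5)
Your argument is correct. Note first that the paper does not actually prove this statement: it is quoted from Sousa and Omar \cite{SO14} as background, so there is no in-paper proof to compare against. Your route is nevertheless the natural one, and it is essentially the ``key observation'' the paper attributes to the mirror-inversion literature \cite{ACDE04,KS05}, namely that high-fidelity transfer between the \emph{end} vertices controls all mirror pairs at once. The mechanics check out: the reflection symmetry $(v_r)_{n+1-k}=(-1)^{r+1}(v_r)_k$ does make every amplitude $\bra{n+1-j}U(t)\ket{j}$ a convex combination $\sum_r w_r^{(j)}\phi_r(t)$ of the same unit phases; the end vertex has full eigenvalue support since $\sin(\pi r/(n+1))\neq 0$ for $1\le r\le n$, so $w_{\min}>0$; the estimate $1-\cos(\arg\phi_r(t)-\gamma)\le \epsilon/w_{\min}$ from comparing real parts is valid termwise because each summand is nonnegative; and the convexity bound then transfers the alignment to every $j$ (including, harmlessly, a middle vertex) at the \emph{same} time $t$, which is exactly the simultaneity that the Sousa--Omar definition demands. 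One small bonus worth recording: your proof shows that all the amplitudes lie within $\sqrt{2\epsilon/w_{\min}}$ of a single phase $e^{i\gamma}$, which is the phase coherence one actually needs to transport an entangled superposition, and is stronger than the modulus condition in the stated theorem. The only external input is the end-vertex classification of \cite{GKSS12}, as you say, so the ``if'' direction is complete.
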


This definition of pretty good state transfer appears to be more strict than required.  For one, it is restricted to a path, and moreover in Example~\ref{ex:P11} we will demonstrate a pair of vertices on a path which allow almost mirror inversion but which individually do not admit pretty good state transfer.  Hence we propose defining that a graph has \emph{pretty good state transfer} between states $\ket{\mathbf{v}}$ and $\ket{\mathbf{w}}$ if, for any $\epsilon > 0$, there is a time $t >0$ such that
\[
| \bra{\mathbf{w}} U(t) \ket{\mathbf{v}} | > 1 - \epsilon.
\]
We then provide families of paths and initial states for which we can determine whether there is pretty good state transfer based on the eigenvalue support of the initial state.  The results of this paper can also be found in the author's Ph.D.\ dissertation~\cite{PHD}.

\section{State Transfer Model}

In this work, we will exclusively consider modelling interacting qubits of a spin chain, or path.  We consider the model described with an XX-type Hamiltonian with uniform coupling strengths, given by
\[
H = \frac{1}{2} \sum_{j = 1}^{n - 1} J (\sigma_j^x \sigma_{j + 1}^x + \sigma_j^y \sigma_{j + 1}^y),
\]
where $\sigma_j^x$, $\sigma_j^y$, and $\sigma_j^z$ are Pauli matrices acting at position $j$.  We may assume without loss of generality that $J = 1$.  Using the Jordan--Wigner transformation~\cite{JW}, we obtain that when the initial state is restricted to the single excitation space, the evolution of the process is governed by $U(t) = e^{i A t}$, where $A$ is the adjacency matrix of the path.
We note that the Hamiltonian preserves the number of excitations, so as the process evolves, all states will be in the single excitation space.

We consider the following generalization of single-qubit transfer.  Suppose the state sender $S$, having access to one end of the chain, wants to send the $m$-qubit state 
\[
\ket{\mathbf{x}_S} = \sum_{j = 1}^m \beta_j \ket{j}, \qquad \sum_{j = 1}^m |\beta_j|^2 = 1,
\]
where $\ket{j}$ corresponds to the state with only the $j$th qubit in the excited state.  The goal is to send this state to the receiver $R$, having access to the other end of the chain.  So a generic state of the system has the form
\[
\ket{\mathbf{x}} = \sum_{j = 1}^m \beta_j \ket{j}, \qquad \sum_{j = 1}^m |\beta_j|^2 = 1,
\]
where qubits 1 to $m$ correspond to the qubits from which the state is sent, qubits $n - m + 1$ to $n$ correspond to the quibts on which the state is to be received, and qubits $m + 1$ to $n - m$ correspond to the remainder of the spin chain.  We initialize the spin chain to the state $\ket{\mathbf{0}}$, and then the sender places state $\ket{\mathbf{x}_S}$ on the first $m$ qubits, resulting in the system having state $\ket{\mathbf{v}} = \ket{\mathbf{x}_S} \otimes \ket{\mathbf{0}}$.  We then wish for the receiver to receive the mirrored state on the other end of the spin chain, which allows us to take advantage of the symmetry of the spin chain, hence the state of the system we seek is 
\[
\ket{\mathbf{v}^\sigma} := \sum_{j = 1}^n \beta_{n + 1 - j} \ket{j}.
\]
Hence, to compare our desired received state to the output state, we compute the quantum fidelity given by
\begin{align*}
\bra{\mathbf{v}^\sigma} \left( U(\tau) \ket{\mathbf{v}} \bra{\mathbf{v}} U(-\tau) \right) \ket{\mathbf{v}^\sigma} &= \bra{\mathbf{v}^\sigma} U(\tau) \ket{\mathbf{v}} \left( \bra{\mathbf{v}^\sigma} U(\tau) \ket{\mathbf{v}} \right)^\dagger = |\bra{\mathbf{v}^\sigma} U(\tau) \ket{\mathbf{v}}|^2.
\end{align*}

\section{Algebra, Graphs, and Number Theory}

In this section, we provide an overview of the basic definitions and theory from algebraic graph theory and number theory that will be used throughout this paper.    We model a spin chain of $n$ interacting qubits by the graph of a path of $n$ vertices, denoted $P_n$, with the vertices labelled from 1 to $n$ corresponding to qubits and the edges $\{j, j + 1\}$, $1 \le j < n$ corresponding to their interactions.  If $M$ is a symmetric matrix with $d$ distinct eigenvalues $\theta_1 > \theta_2 > \cdots > \theta_d$, then the spectral decomposition of $M$ is
\[
M = \sum_{j = 1}^d \theta_j E_j,
\]
where $E_j$ denotes the orthogonal projection onto the eigenspace corresponding to $\theta_j$.  For a state $\ket{\mathbf{v}}$, the \emph{eigenvalue support} of $\ket{\mathbf{v}}$ is the following subset of the eigenvalues:
\[
\esupp{\ket{\mathbf{v}}} = \{ \theta_j : E_j \ket{\mathbf{v}} \neq \ket{\mathbf{0}}\}.
\]
We say that states $\ket{\mathbf{v}}$ and $\ket{\mathbf{w}}$ are \emph{cospectral} if for each idempotent $E_j$ in the spectral decomposition of $X$, we have $\bra{\mathbf{v}} E_j \ket{\mathbf{v}} = \bra{\mathbf{w}} E_j \ket{\mathbf{w}}$, parallel if for each eigenvalue $\theta_j$, the vectors $E_j \ket{\mathbf{v}}$ and $E_j \ket{\mathbf{w}}$ are parallel, and \emph{strongly cospectral} if for each eigenvalue $\theta_j$, there exists a $\gamma_j$ such that $|\gamma_j| = 1$ and $E_j \ket{\mathbf{v}} = \gamma_j E_j \ket{\mathbf{w}}$; these definitions expand the notions of cospectrality, parallel, and strong cospectrality to multiple qubit states (for strongly cospectral vertices $a$ and $b$, we can make the stronger statement that $E_j \ket{a} = \pm E_j \ket{b}$).  As we have with vertices, strongly cospectral is equivalent to cospectral and parallel, as demonstrated in the following lemma.

\begin{lemma}
Two states $\ket{\mathbf{v}}$ and $\ket{\mathbf{w}}$ in $X$ are strongly cospectral if and only if they are parallel and cospectral.
\end{lemma}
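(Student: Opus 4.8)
The plan is to exploit the fact that each $E_j$ is an orthogonal projection, so $E_j = E_j^2 = E_j^\dagger$, and therefore the diagonal quadratic form decomposes as a squared norm:
\[
\bra{\mathbf{v}} E_j \ket{\mathbf{v}} = \bra{\mathbf{v}} E_j^\dagger E_j \ket{\mathbf{v}} = \norm{E_j \ket{\mathbf{v}}}^2,
\]
and likewise $\bra{\mathbf{w}} E_j \ket{\mathbf{w}} = \norm{E_j \ket{\mathbf{w}}}^2$. This single identity converts the cospectrality condition into the statement that $\norm{E_j\ket{\mathbf{v}}} = \norm{E_j\ket{\mathbf{w}}}$ for every $j$, which is exactly what links the magnitude data to the parallelism data.

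For the forward direction, I would suppose $\ket{\mathbf{v}}$ and $\ket{\mathbf{w}}$ are strongly cospectral, so for each $j$ there is a unit scalar $\gamma_j$ with $E_j\ket{\mathbf{v}} = \gamma_j E_j\ket{\mathbf{w}}$. Parallelism is then immediate, since $E_j\ket{\mathbf{v}}$ is a scalar multiple of $E_j\ket{\mathbf{w}}$. Cospectrality follows from the norm identity together with $|\gamma_j| = 1$, which gives $\norm{E_j\ket{\mathbf{v}}} = |\gamma_j|\,\norm{E_j\ket{\mathbf{w}}} = \norm{E_j\ket{\mathbf{w}}}$.

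For the reverse direction, I would assume $\ket{\mathbf{v}}$ and $\ket{\mathbf{w}}$ are parallel and cospectral and fix an eigenvalue $\theta_j$, splitting into cases according to whether $E_j\ket{\mathbf{v}}$ and $E_j\ket{\mathbf{w}}$ vanish. If both are zero, any unit scalar $\gamma_j$ works. If exactly one is zero, cospectrality forces $\norm{E_j\ket{\mathbf{v}}} = \norm{E_j\ket{\mathbf{w}}}$, so the other vector must vanish too --- a contradiction, so this case does not arise. If both are nonzero, parallelism supplies a nonzero scalar $c_j$ with $E_j\ket{\mathbf{v}} = c_j E_j\ket{\mathbf{w}}$, and the norm identity yields $|c_j|\,\norm{E_j\ket{\mathbf{w}}} = \norm{E_j\ket{\mathbf{w}}}$, whence $|c_j| = 1$; taking $\gamma_j = c_j$ completes this eigenvalue.

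The point requiring the most care --- rather than a genuine obstacle --- is the bookkeeping of the degenerate cases in which one projection is zero: parallelism alone places no constraint on the other vector, and it is precisely cospectrality, through the norm identity, that rules out a mismatch. Once the norm identity is established, every case reduces to comparing $|c_j|$ against $1$, so the argument is short and the chief value of the lemma is the clean reformulation it provides for later use.
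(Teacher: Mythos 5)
Your proposal is correct and follows essentially the same route as the paper: both directions hinge on $E_j$ being a real symmetric idempotent, so that $\bra{\mathbf{v}}E_j\ket{\mathbf{v}} = \norm{E_j\ket{\mathbf{v}}}^2$, with the forward direction immediate and the reverse direction handled by extracting the scalar from parallelism and using cospectrality to force $|c_j|=1$ (and to dispose of the degenerate case where a projection vanishes). Your explicit three-way case split is a slightly tidier organization of the same argument the paper gives.
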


\begin{proof}
Suppose states $\ket{\mathbf{v}}$ and $\ket{\mathbf{w}}$ are strongly cospectral.  It immediately follows from the definition that $\ket{\mathbf{v}}$ and $\ket{\mathbf{w}}$ are parallel.  We verify that $\ket{\mathbf{v}}$ and $\ket{\mathbf{w}}$ are cospectral by observing
\[
\bra{\mathbf{v}} E_j \ket{\mathbf{v}} = \gamma_j \bra{\mathbf{v}} E_j \ket{\mathbf{w}} = \gamma_j \overline{\gamma}_j \bra{\mathbf{w}} E_j \ket{\mathbf{w}} = \bra{\mathbf{w}} E_j \ket{\mathbf{w}},
\]
using the fact that the idempotents are real symmetric.

Conversely, suppose $\ket{\mathbf{v}}$ and $\ket{\mathbf{w}}$ are parallel and cospectral.  It follows that
\[
\bra{\mathbf{w}} E_j \ket{\mathbf{v}} = c \bra{\mathbf{w}} E_j \ket{\mathbf{w}} = c \bra{\mathbf{v}} E_j \ket{\mathbf{v}} = c \overline{c} \bra{\mathbf{w}} E_j \ket{\mathbf{v}} = |c|^2 \bra{\mathbf{w}} E_j \ket{\mathbf{v}}
\]
again using the fact that idempotents are symmetric.  Hence it follows that if $\bra{\mathbf{w}} E_j \ket{\mathbf{v}}$ is nonzero, then $|c|^2 = 1$, so $c = \gamma_j, |\gamma_j| = 1$ as required.

Now consider the situation when $\bra{\mathbf{w}} E_j \ket{\mathbf{v}} = 0$.  Since $\ket{\mathbf{v}}$ and $\ket{\mathbf{w}}$ are parallel, it follows that $\bra{\mathbf{v}} E_j \ket{\mathbf{v}} = \bra{\mathbf{w}} E_j \ket{\mathbf{w}} = 0$.  We then observe that 
\[
\norm{E_j \ket{\mathbf{v}}}^2 = \bra{\mathbf{v}} E_j^\dagger E_j \ket{\mathbf{v}} = \bra{\mathbf{v}} E_j \ket{\mathbf{v}} = 0,
\]
using the fact that $E_j$ is real symmetric and idempotent.  Similarly, $\norm{E_j \ket{\mathbf{w}}}^2 = 0$.  Hence $E_j \ket{\mathbf{v}} = E_j \ket{\mathbf{w}} = \ket{\mathbf{0}}$ as required.
\end{proof}

A connection between pretty good state transfer and strongly cospectral vertices was first observed by Dave Witte Morris (as cited in~\cite{Godsil2012}).  We will prove the analogous result for arbitrary states.

\begin{lemma} \cite[private communication with Morris]{Godsil2012}
Let $X$ be a graph and let $a$ and $b$ be vertices of $X$.  If there is pretty good state transfer between $\ket{a}$ and $\ket{b}$, then $a$ and $b$ are strongly cospectral.
\end{lemma}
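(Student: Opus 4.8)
The plan is to compare the fixed upper bound $\sum_j |\bra{b} E_j \ket{a}|$ on the transfer amplitude against the value $1$ that the amplitude approaches under pretty good state transfer, and then to read off cospectrality and parallelism from the resulting equality cases in the Cauchy--Schwarz inequality. First I would write the spectral decomposition $A = \sum_j \theta_j E_j$ of the adjacency matrix, so that $U(t) = e^{iAt} = \sum_j e^{i\theta_j t} E_j$ and hence
\[
\bra{b} U(t) \ket{a} = \sum_j e^{i\theta_j t} \bra{b} E_j \ket{a}.
\]
Applying the triangle inequality, then the Cauchy--Schwarz inequality termwise, and finally a second application of Cauchy--Schwarz to the sequences $(\norm{E_j\ket{a}})_j$ and $(\norm{E_j\ket{b}})_j$, gives for every $t$
\[
|\bra{b}U(t)\ket{a}| \le \sum_j |\bra{b}E_j\ket{a}| \le \sum_j \norm{E_j\ket{a}}\,\norm{E_j\ket{b}} \le \Big(\sum_j\norm{E_j\ket{a}}^2\Big)^{1/2}\Big(\sum_j\norm{E_j\ket{b}}^2\Big)^{1/2} = 1,
\]
where the final equality uses that each $E_j$ is a projection and that $\ket{a}$, $\ket{b}$ are unit vectors, so $\sum_j \norm{E_j\ket{a}}^2 = 1$ and likewise for $\ket{b}$.

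Next I would invoke the pretty good state transfer hypothesis. Since $|\bra{b}U(t)\ket{a}|$ can be made larger than $1 - \epsilon$ for every $\epsilon > 0$, the fixed, $t$-independent quantity $\sum_j |\bra{b}E_j\ket{a}|$ must be at least $1$; combined with the chain of inequalities above, every inequality in that chain is in fact an equality. Equality in the outermost Cauchy--Schwarz forces $(\norm{E_j\ket{a}})_j$ and $(\norm{E_j\ket{b}})_j$ to be proportional, and since both sequences have unit norm the constant of proportionality is $1$, giving $\norm{E_j\ket{a}} = \norm{E_j\ket{b}}$ for every $j$. Because $\bra{a}E_j\ket{a} = \norm{E_j\ket{a}}^2$ and $\bra{b}E_j\ket{b} = \norm{E_j\ket{b}}^2$ (using that $E_j$ is real symmetric and idempotent), this is precisely the statement that $a$ and $b$ are cospectral.

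Finally, equality in the termwise Cauchy--Schwarz inequality $|\bra{b}E_j\ket{a}| = \norm{E_j\ket{a}}\,\norm{E_j\ket{b}}$ shows that for each $j$ the vectors $E_j\ket{a}$ and $E_j\ket{b}$ are parallel (when one of them is zero, the norm equality already forces the other to vanish), so $a$ and $b$ are parallel. Having established that $a$ and $b$ are both cospectral and parallel, I would conclude that they are strongly cospectral by appealing to the preceding lemma. The main obstacle is the bookkeeping of the two equality conditions: one must be careful that the pretty good state transfer hypothesis genuinely saturates the fixed bound $\sum_j|\bra{b}E_j\ket{a}| \le 1$, and that both Cauchy--Schwarz inequalities --- the termwise one yielding parallelism and the aggregate one yielding equal projections --- are extracted from that single saturation, including the degenerate terms where a projection vanishes.
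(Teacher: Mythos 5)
Your argument is correct and is essentially identical to the proof the paper gives for the generalized version of this lemma (for arbitrary states $\ket{\mathbf{v}}$, $\ket{\mathbf{w}}$): the same chain of triangle inequality followed by two applications of Cauchy--Schwarz, saturation of the $t$-independent bound $\sum_j |\bra{b}E_j\ket{a}| \le 1$ via the pretty good state transfer hypothesis, and extraction of parallelism and cospectrality from the two equality cases before invoking the equivalence of ``parallel and cospectral'' with ``strongly cospectral.'' No issues.
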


\begin{lemma}
Let $X$ be a graph and let $\ket{\mathbf{v}}$ and $\ket{\mathbf{w}}$ be states of $X$.  If there is pretty good state transfer between $\ket{\mathbf{v}}$ and $\ket{\mathbf{w}}$, then $\ket{\mathbf{v}}$ and $\ket{\mathbf{w}}$ are strongly cospectral.
\end{lemma}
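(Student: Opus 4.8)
The plan is to generalize the vertex argument of Morris by replacing its single use of Cauchy--Schwarz with two nested applications, one per eigenvalue and one across eigenvalues. Writing the spectral decomposition $A = \sum_{j=1}^d \theta_j E_j$ of the relevant matrix gives $U(t) = \sum_j e^{i\theta_j t} E_j$, so the transfer amplitude factors spectrally as
\[
\bra{\mathbf{w}} U(t) \ket{\mathbf{v}} = \sum_{j=1}^d e^{i\theta_j t}\, \bra{\mathbf{w}} E_j \ket{\mathbf{v}}.
\]
Setting $c_j = \bra{\mathbf{w}} E_j \ket{\mathbf{v}}$, $a_j = \bra{\mathbf{v}} E_j \ket{\mathbf{v}} = \norm{E_j\ket{\mathbf{v}}}^2$, and $b_j = \bra{\mathbf{w}} E_j \ket{\mathbf{w}} = \norm{E_j\ket{\mathbf{w}}}^2$, I would first record the time-independent upper bound obtained by the triangle inequality, then Cauchy--Schwarz on each term (using $c_j = \langle E_j\ket{\mathbf{w}}, E_j\ket{\mathbf{v}}\rangle$ since the $E_j$ are real symmetric idempotents), and then Cauchy--Schwarz across terms:
\[
|\bra{\mathbf{w}} U(t) \ket{\mathbf{v}}| \le \sum_j |c_j| \le \sum_j \sqrt{a_j b_j} \le \Big(\sum_j a_j\Big)^{1/2}\Big(\sum_j b_j\Big)^{1/2} = 1,
\]
where the final equality uses $\sum_j E_j = I$ and the fact that $\ket{\mathbf{v}}$ and $\ket{\mathbf{w}}$ are unit states.

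The second step extracts the consequences of pretty good state transfer. Since $\sum_j |c_j|$ is independent of $t$ while the hypothesis makes the fidelity exceed $1 - \epsilon$ for every $\epsilon > 0$, I conclude $\sum_j |c_j| = 1$, which forces the two time-independent inequalities $\sum_j |c_j| \le \sum_j \sqrt{a_j b_j} \le 1$ to be equalities. I would then read off the two structural conditions separately. Equality in the aggregate Cauchy--Schwarz inequality holds precisely when $(\sqrt{a_j})_j$ and $(\sqrt{b_j})_j$ are parallel; as both sum to $1$, this yields $a_j = b_j$ for every $j$, that is, $\bra{\mathbf{v}} E_j \ket{\mathbf{v}} = \bra{\mathbf{w}} E_j \ket{\mathbf{w}}$, which is exactly cospectrality. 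Equality $|c_j| = \sqrt{a_j b_j}$ for each $j$ is the per-eigenvalue Cauchy--Schwarz equality, giving that $E_j \ket{\mathbf{v}}$ and $E_j \ket{\mathbf{w}}$ are linearly dependent whenever they are nonzero, i.e.\ parallel. Having established that $\ket{\mathbf{v}}$ and $\ket{\mathbf{w}}$ are both parallel and cospectral, I invoke the earlier lemma to conclude strong cospectrality.

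The only point requiring care is the bookkeeping at eigenvalues with $E_j \ket{\mathbf{v}} = \ket{\mathbf{0}}$: there I must use cospectrality, $a_j = b_j = 0$, to force $E_j \ket{\mathbf{w}} = \ket{\mathbf{0}}$ as well, so that the parallel condition holds trivially rather than being left undetermined by the vanishing per-term inequality. Beyond that the argument is routine. I note in particular that, in contrast to the analysis of \emph{when} pretty good state transfer occurs, no phase-alignment (equality in the triangle inequality) is needed for this direction: strong cospectrality follows from the magnitude bound alone.
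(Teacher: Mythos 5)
Your proposal is correct and follows essentially the same route as the paper: the triangle inequality followed by a per-eigenvalue and then an aggregate Cauchy--Schwarz bound, with equality forced throughout, yielding parallelism and cospectrality respectively and hence strong cospectrality via the earlier lemma. Your explicit handling of the $E_j\ket{\mathbf{v}}=\ket{\mathbf{0}}$ case and the remark that no phase alignment is needed are welcome clarifications but do not change the argument.
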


\begin{proof}
By definition, if we have pretty good state transfer from $\ket{\mathbf{v}}$ to $\ket{\mathbf{w}}$ in $X$, then there exists a sequence of times $\{t_k\}$ such that
\[
\lim_{k \to \infty} |\bra{\mathbf{w}} U(t_k) \ket{\mathbf{v}}| = 1.
\]
We calculate the following:
\begin{align*}
1 &= \lim_{k \to \infty} |\bra{\mathbf{w}} U(t_k) \ket{\mathbf{v}}| \\
&\le \sum_{\theta} | \bra{\mathbf{w}} E_\theta \ket{\mathbf{v}} | \\
&\le \sum_{\theta} \sqrt{\bra{\mathbf{v}} E_\theta \ket{\mathbf{v}}} \sqrt{\bra{\mathbf{w}} E_\theta \ket{\mathbf{w}}} \\
&\le \sqrt{ \sum_\theta \bra{\mathbf{v}} E_\theta \ket{\mathbf{v}} \sum_\theta \bra{\mathbf{w}} E_\theta \ket{\mathbf{w}} } = 1.
\end{align*}
The first inequality is an application of the triangle inequality.  The second and third inequalities are applications of Cauchy-Schwarz, where we take $u = E_\theta \ket{\mathbf{v}}$ and $v = E_\theta \ket{\mathbf{w}}$ and $u = (\sqrt{\bra{\mathbf{v}} E_\theta \ket{\mathbf{v}}})_\theta$ and $v = (\sqrt{\bra{\mathbf{w}} E_\theta \ket{\mathbf{w}}})_\theta$ respectively. The last inequality follows from the spectral decomposition and the definition of $\ket{\mathbf{v}}, \ket{\mathbf{w}}$.

Therefore, all inequalities must hold with equality.  In particular, the second inequality implies $\ket{\mathbf{v}}$ and $\ket{\mathbf{w}}$ are parallel, and the third inequality implies $\ket{\mathbf{v}}$ and $\ket{\mathbf{w}}$ are cospectral, which completes the proof.
\end{proof}

The spectrum of the adjacency matrix of $P_n$ (see \cite{BH12} for example), is
\[
\theta_j = 2 \cos \frac{\pi j}{n + 1}, \quad 1 \le j \le n,
\]
and the eigenvector corresponding to $\theta_j$ is given by
\[
\ket{\bm{\beta}} = \sum_{k = 1}^n \beta_k \ket{k}, \quad \beta_k = \sin \frac{k \pi j}{n + 1}.
\]
Hence the spectral idempotent $E_j$ corresponding to $\theta_j$ is such that
\[
\bra{k} (E_j) \ket{\ell} = \frac{2}{n + 1} \sin \left( \frac{k j \pi}{n + 1} \right) \sin \left( \frac{\ell j \pi}{n + 1} \right).
\]
Thus, we are able to demonstrate the following.

\begin{lemma} \label{lem:path-sc}
Let $\ket{\mathbf{v}}$ be a state of a path $P_n$.  Then $\ket{\mathbf{v}}$ and $\ket{\mathbf{v}^\sigma}$ are strongly cospectral.  Moreover, $E_j \ket{\mathbf{v}} = (-1)^{j + 1} E_j \ket{\mathbf{v}^\sigma}$.
\end{lemma}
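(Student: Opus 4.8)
The plan is to exploit the reflective symmetry of the path. Let $R$ denote the permutation operator that reverses the vertex labelling, so that $R\ket{k} = \ket{n+1-k}$; then the definition of $\ket{\mathbf{v}^\sigma}$ (reindexing the sum by $k = n+1-j$) gives $\ket{\mathbf{v}^\sigma} = R\ket{\mathbf{v}}$. Note that $R$ is a real symmetric involution. Since $\ket{\mathbf{v}}$ and $\ket{\mathbf{v}^\sigma}$ are candidates for strong cospectrality, and the ``Moreover'' clause of the statement already exhibits the required unimodular scalars $(-1)^{j+1}$, I would prove that identity first and then read off strong cospectrality from it. The key computation is therefore to determine how $R$ acts on the eigenvectors of $P_n$.

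Writing $\ket{\mathbf{u}_j}$ for the (unnormalised) eigenvector whose $k$-th entry is $\sin\frac{kj\pi}{n+1}$, I would compute the $k$-th entry of $R\ket{\mathbf{u}_j}$, namely $\sin\frac{(n+1-k)j\pi}{n+1}$. Rewriting the argument as $j\pi - \frac{kj\pi}{n+1}$ and applying the angle-subtraction formula together with $\sin(j\pi) = 0$ and $\cos(j\pi) = (-1)^j$ yields $\sin\frac{(n+1-k)j\pi}{n+1} = (-1)^{j+1}\sin\frac{kj\pi}{n+1}$. This shows $R\ket{\mathbf{u}_j} = (-1)^{j+1}\ket{\mathbf{u}_j}$; that is, each eigenvector of the path is simultaneously an eigenvector of $R$, with sign determined by the parity of $j$. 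Conceptually this is just the statement that the reflection symmetry of $P_n$ acts on the $j$-th eigenspace by the scalar $(-1)^{j+1}$.

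With this in hand I would use the explicit rank-one form of the idempotent, $E_j = \frac{2}{n+1}\ket{\mathbf{u}_j}\bra{\mathbf{u}_j}$, which is consistent with the given matrix entries once one notes $\bra{\mathbf{u}_j}\ket{\mathbf{u}_j} = \frac{n+1}{2}$. Since $R$ is real and symmetric, $\bra{\mathbf{u}_j}R = (R\ket{\mathbf{u}_j})^\dagger = (-1)^{j+1}\bra{\mathbf{u}_j}$, and hence $E_j\ket{\mathbf{v}^\sigma} = E_j R\ket{\mathbf{v}} = \frac{2}{n+1}\ket{\mathbf{u}_j}\bra{\mathbf{u}_j}R\ket{\mathbf{v}} = (-1)^{j+1}E_j\ket{\mathbf{v}}$. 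Multiplying through by $(-1)^{j+1}$ gives the claimed identity $E_j\ket{\mathbf{v}} = (-1)^{j+1}E_j\ket{\mathbf{v}^\sigma}$. Strong cospectrality then follows immediately by taking $\gamma_j = (-1)^{j+1}$, which satisfies $|\gamma_j| = 1$, so that the defining condition for strong cospectrality holds.

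The only genuine content is the trigonometric identity of the second paragraph; everything else is formal manipulation of the rank-one idempotents, so I do not anticipate a serious obstacle. The one point deserving care is the index substitution $\ell \mapsto n+1-\ell$, which also underlies the alternative coordinate-based derivation (expanding $E_j\ket{\mathbf{v}^\sigma}$ directly as $\frac{2}{n+1}\bigl(\sum_\ell \beta_{n+1-\ell}\sin\frac{\ell j\pi}{n+1}\bigr)\ket{\mathbf{u}_j}$ and reindexing); one should verify the endpoint behaviour of this reindexing to be certain the sign $(-1)^{j+1}$ is pulled out correctly.
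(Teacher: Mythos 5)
Your proposal is correct and rests on exactly the same computation as the paper's proof: the angle-subtraction identity $\sin\frac{(n+1-k)j\pi}{n+1} = (-1)^{j+1}\sin\frac{kj\pi}{n+1}$, combined with the explicit form of the rank-one idempotents. Packaging this as the statement that the reflection $R$ acts as the scalar $(-1)^{j+1}$ on the $j$-th eigenspace, rather than verifying the sign entrywise and extending by linearity over the vertices as the paper does, is only a cosmetic reorganization.
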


\begin{proof}
First consider $\ket{\mathbf{v}} = \ket{x}$ and $\ket{\mathbf{v}^\sigma} = \ket{n + 1 - x}$.  For a fixed $y$, we have
\begin{align*}
\bra{y} E_j \ket{x} %
&= \frac{2}{n + 1} \sin \left( \frac {x j \pi}{n + 1} \right) \sin \left( \frac{y j \pi}{n + 1} \right) \\
\bra{y} E_r \ket{n + 1 - x} %
&= \frac{2}{n + 1} \sin \left( \frac{(n + 1 - x) j \pi}{n + 1} \right) \sin \left( \frac{y j \pi}{n + 1} \right) \\
&= \frac{2 }{n + 1} \left( \sin(j \pi) \cos \left( \frac{x j \pi}{n + 1} \right) - \cos(j \pi) \sin \left( \frac{x j \pi}{n + 1} \right) \right)  \sin \left( \frac{y j \pi}{n + 1} \right) \\
&= (-1)^{j + 1} \bra{y} E_j \ket{x},
\end{align*}
from which it follows that $E_j \ket{x} = (-1)^{j + 1} E_j \ket{n + 1 - x}$.  It follows that for a given $j$, we have
\begin{align*}
E_j \ket{\mathbf{v}} %
&= E_j \sum_{x \in V(P_n)} \beta_x \ket{x} \\
&= \sum_{x \in V(P_n)} \beta_x E_j \ket{x} \\
&= \sum_{x \in V(P_n)} \beta_x (-1)^{j + 1} E_j \ket{n + 1 - x} \\
&= (-1)^{j + 1} E_j \sum_{x \in V(P_n)} \beta_x \ket{n + 1 - x} \\
&= (-1)^{j + 1} E_j \ket{\mathbf{v}^\sigma},
\end{align*}
as desired.
\end{proof}

Finally, we present Kronecker's Theorem as a key tool that will be used throughout.

\begin{theorem}[Kronecker, see~\cite{LZ82}] \label{KT}
Let $\theta_0, \ldots, \theta_d$ and $\zeta_0, \ldots, \zeta_d$ be arbitrary real numbers.  For an arbitrarily small $\epsilon$, the system of inequalities
\[
| \theta_r y - \zeta_r | < \epsilon \pmod{2 \pi}, \quad (r = 0, \ldots, d),
\]
admits a solution for $y$ if and only if, for integers $\ell_0, \ldots, \ell_d$ such that
\[
\ell_0 \theta_0 + \cdots + \ell_d \theta_d = 0,
\]
then
\[
\ell_0 \zeta_0 + \cdots + \ell_d \zeta_d \equiv 0 \pmod{2 \pi}.
\]
\end{theorem}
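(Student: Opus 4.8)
The plan is to prove the two directions separately, with the forward (necessity) direction being a short direct estimate and the converse (sufficiency) carrying essentially all of the content. For necessity, suppose that for every $\epsilon > 0$ there is a $y$ and integers $m_0, \ldots, m_d$ with $|\theta_r y - \zeta_r - 2\pi m_r| < \epsilon$ for all $r$. Given any relation $\sum_r \ell_r \theta_r = 0$ with $\ell_r \in \mathbb{Z}$, I would multiply the $r$-th inequality by $\ell_r$ and sum; the terms $\ell_r \theta_r y$ cancel because of the relation, leaving $\left| \sum_r \ell_r \zeta_r + 2\pi \sum_r \ell_r m_r \right| \le \epsilon \sum_r |\ell_r|$. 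Since $\sum_r \ell_r m_r$ is an integer and $\epsilon$ is arbitrary, letting $\epsilon \to 0$ forces $\sum_r \ell_r \zeta_r$ to be congruent to $0$ modulo $2\pi$, which is exactly the claimed conclusion.

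For sufficiency the natural approach is to recast the problem on the torus $\mathbb{T} = (\mathbb{R}/2\pi\mathbb{Z})^{d+1}$. The map $y \mapsto (\theta_0 y, \ldots, \theta_d y) \bmod 2\pi$ is a continuous homomorphism from $\mathbb{R}$ into $\mathbb{T}$, so the closure $H$ of its image is a closed subgroup, and the system of inequalities is solvable for every $\epsilon > 0$ precisely when the point $\bm{\zeta} = (\zeta_0, \ldots, \zeta_d) \bmod 2\pi$ lies in $H$. I would then identify the characters of $\mathbb{T}$ with tuples $(\ell_0, \ldots, \ell_d) \in \mathbb{Z}^{d+1}$ via $\mathbf{x} \mapsto \exp(i \sum_r \ell_r x_r)$, and observe that such a character is trivial on $H$ if and only if it is trivial on the whole orbit, that is, $\exp(i y \sum_r \ell_r \theta_r) = 1$ for all $y$, which holds if and only if $\sum_r \ell_r \theta_r = 0$. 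Under this dictionary, the hypothesis of the theorem says exactly that every character trivial on $H$ is also trivial at $\bm{\zeta}$, so the task reduces to promoting that character-level condition to actual membership $\bm{\zeta} \in H$.

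The main obstacle, and the crux of the whole theorem, is precisely this duality step: a point of a compact abelian group lies in a given closed subgroup $H$ if and only if it is annihilated by every character annihilating $H$. I would establish it for the torus either through the structure theory of closed subgroups of $\mathbb{T}$ (each such subgroup is the intersection of the kernels of the characters vanishing on it), or via Pontryagin duality applied to the quotient $\mathbb{T}/H$: if $\bm{\zeta} \notin H$, then its image in the compact Hausdorff group $\mathbb{T}/H$ is not the identity, so some character separates it from the identity, and this character pulls back to a character of $\mathbb{T}$ that is trivial on $H$ but nontrivial at $\bm{\zeta}$, contradicting the hypothesis. Combining this with the previous paragraph places $\bm{\zeta} \in H$ and completes the converse.

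As a self-contained alternative to the duality argument, I could run a purely analytic proof. After adjoining one extra coordinate with $\theta = \zeta = 0$ (which leaves both the hypothesis and the solvability unchanged) so that $d+1$ becomes the target value, I would set $S(y) = \sum_r e^{i(\theta_r y - \zeta_r)}$ and compute the mean value of $|S(y)|^{2N}$ over $y$. Expanding the $N$-th powers and averaging annihilates every frequency except the resonant ones $\sum_r (a_r - b_r)\theta_r = 0$, and the hypothesis guarantees that the surviving phase factors all equal $1$, so this mean coincides with its value in the case $\bm{\zeta} = \mathbf{0}$; letting $N \to \infty$ and using that the orbit returns arbitrarily near the origin then yields $\sup_y |S(y)| = d+1$ and hence approximate solutions. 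I expect the genuine difficulty there to be the same one in disguise, namely showing that the orbit is equidistributed in its closure so that the near-origin return set has positive density.
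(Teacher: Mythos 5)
The paper offers no proof of this statement at all: Kronecker's theorem is quoted as a classical result with a citation to Levitan and Zhikov and used as a black box, so there is no in-paper argument to compare yours against; I can only judge your proposal on its own terms. Your necessity direction is correct: for each $\epsilon$ the relation $\sum_r \ell_r \theta_r = 0$ cancels the $y$-terms and leaves $\operatorname{dist}\bigl(\sum_r \ell_r \zeta_r,\, 2\pi\mathbb{Z}\bigr) \le \epsilon \sum_r |\ell_r|$, and since $2\pi\mathbb{Z}$ is closed and discrete, letting $\epsilon \to 0$ forces exact membership. Your sufficiency direction is the standard modern proof: solvability for every $\epsilon$ is equivalent to $\bm{\zeta}$ lying in the closed subgroup $H$ obtained as the closure of the orbit $y \mapsto (\theta_0 y, \ldots, \theta_d y)$ in the torus; the characters annihilating $H$ correspond exactly to the integer relations $\sum_r \ell_r \theta_r = 0$; and the hypothesis is precisely that all such characters are trivial at $\bm{\zeta}$, whence Pontryagin--Peter--Weyl duality (characters of the compact Hausdorff group $\mathbb{T}/H$ separate points) places $\bm{\zeta}$ in $H$. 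This is correct, with the caveat you yourself flag: essentially all of the content is outsourced to the duality theorem, so the argument is self-contained only modulo that standard input. Your analytic alternative via mean values of $|S(y)|^{2N}$ is the Bohr-style proof closer in spirit to the cited source, but as written it is a sketch --- the passage from the mean-value computation to $\sup_y |S(y)| = d+1$ needs the existence of mean values for the trigonometric polynomials involved and a lower bound on the resonant contribution, which you acknowledge rather than supply. In summary: nothing you wrote is wrong, the duality route is a complete and standard proof given Pontryagin duality, and since the paper merely cites the theorem, any correct proof is by definition a different route from the paper's.
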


\section{Main Results}

We begin this section by demonstrating that in $P_{11}$, the state $\frac{1}{\sqrt{2}} \left( \ket{1} + \ket{3} \right)$ admits almost mirror inversion but neither $\ket{1}$ nor $\ket{3}$ admit pretty good state transfer, justifying expanding the definition of pretty good state transfer of multiple qubit states.

\begin{example} \label{ex:P11}
By Theorem~\ref{thm-pgst-path}, we see that neither $\ket{1}$ nor $\ket{3}$ admit pretty good state transfer on $P_{11}$.  It remains to verify there is almost mirror inversion on $P_{11}$ of $\ket{\mathbf{v}} = \frac{1}{\sqrt{2}} \left( \ket{1} + \ket{3} \right)$ to $\ket{\mathbf{v}^\sigma} = \frac{1}{\sqrt{2}} \left( \ket{11} + \ket{9} \right)$. We first demonstrate that $\theta_6 = \ket{\mathbf{0}} \notin \Theta_{\ket{\mathbf{v}}}$.  We have that
\begin{align*}
\bra{x} E_6 \ket{\mathbf{v}} &= \frac{1}{\sqrt{2}} \left( \bra{x} E_6 \ket{1} + \bra{x} E_6 \ket{3} \right) \\ &= \frac{1}{\sqrt{2}} \left( \frac{1}{6} \sin \left( \frac{\pi x}{2} \right) \sin \left( \frac{\pi}{2} \right) + \frac{1}{6} \sin \left( \frac{\pi x}{2} \right) \sin \left( \frac{3 \pi}{2} \right) \right) = 0,
\end{align*}
and so $E_6 \ket{\mathbf{v}} = \ket{\mathbf{0}}$ as claimed.

Now, we wish to show that
\[
| \bra{\mathbf{v}^\sigma} U(\tau) \ket{\mathbf{v}} | > 1 - \epsilon,
\]
which, using spectral decomposition, is equivalent to
\[
\left| \bra{\mathbf{v}^\sigma} \left( \sum_{\theta_j \in \Theta_{\ket{\mathbf{v}}}} \exp(i \tau \theta_j) E_j \right) \ket{\mathbf{v}} \right| > 1 - \epsilon,
\]
and by linearity, we obtain
\[
\left| \sum_{\theta_j \in \Theta_{\ket{\mathbf{v}}}} \bra{\mathbf{v}^\sigma} \exp(i \tau \theta_j) E_j \ket{\mathbf{v}} \right| > 1 - \epsilon.
\]
We then observe that $E_j \ket{\mathbf{v}} = (-1)^{j + 1} E_j \ket{\mathbf{v}^\sigma}$ and obtain
\[
\left| \sum_{\theta_j \in \Theta_{\ket{\mathbf{v}}}} \bra{\mathbf{v}^\sigma} (-1)^{j + 1} \exp(i \tau \theta_j) E_j \ket{\mathbf{v}^\sigma} \right| > 1 - \epsilon
\]
from which it follows that we desire
\[
|\tau \theta_j - \sigma_j \pi - \delta| < \epsilon' \pmod{2\pi}, \quad (j : \theta_j \in \Theta_{\ket{\mathbf{v}}}),
\]
where $\sigma_r$ is even if $r$ is odd and odd if $r$ is even, and $\delta$ is some fixed value.

If we let $\delta = 0$, then the inequalities corresponding to $\theta_j$ and $\theta_{11 - j}$ differ only by a factor of $-1$, so it suffices to consider the system
\begingroup
\addtolength{\jot}{1em}
\begin{align*}
\left| \frac{\sqrt{6} + \sqrt{2}}{2} \tau \right| &< \epsilon' \pmod{2\pi}, \\
\left| \sqrt{3} \tau + \pi \right|  &< \epsilon' \pmod{2\pi}, \\
\left| \sqrt{2} \tau \right|  &< \epsilon' \pmod{2\pi}, \\
\left| \tau + \pi \right|  &< \epsilon' \pmod{2\pi}, \\
\left| \frac{\sqrt{6} - \sqrt{2}}{2} \tau \right| &< \epsilon' \pmod{2\pi}.
\end{align*}
\endgroup
In order to apply Kronecker's Theorem (Theorem~\ref{KT}), we need to show that for every integer solution to the equation
\[
\frac{\sqrt{6} + \sqrt{2}}{2} \ell_1 + \sqrt{3} \ell_2 + \sqrt{2} \ell_3 + \ell_4 + \frac{\sqrt{6} - \sqrt{2}}{2} \ell_5 = 0,
\]
we have
\[
0 (\ell_1 + \ell_3 + \ell_5) + \pi (\ell_2 + \ell_4) \equiv 0 \pmod{2\pi}.
\]
Since the only integer solutions have that $\ell_2 = \ell_4 = 0$, the above equation is satisfied.  Hence, we can apply Kronecker's Theorem, which verifies that we have almost mirror inversion between states $\ket{\mathbf{v}} = \frac{1}{\sqrt{2}} \left( \ket{1} + \ket{3} \right)$ and $\ket{\mathbf{v}^\sigma} = \frac{1}{\sqrt{2}} \left( \ket{11} + \ket{9} \right)$. \hfill $\Diamond$
\end{example}

We now proceed to develop the tools required to prove our main result.  We first consider an extension of the following result due to Banchi et al.~\cite{BCGS16}.

\begin{theorem} \cite{BCGS16}
Let $a$ and $b$ be vertices of a graph $X$.  Then pretty good state transfer occurs between $a$ and $b$ if and only if both conditions below are satisfied.
\begin{enumerate}[a)]
\item Vertices $a$ and $b$ are strongly cospectral, in which case let $\zeta_i = (1 - \sigma_i) / 2$.
\item If there is a set of integers $\{\ell_j\}$ such that
\[
\sum_{\theta_j \in \esupp{a}} \ell_j \theta_j = 0 \mbox{ and } \sum_{\theta_j \in \esupp{a}} \ell_j \zeta_j \mbox{ is odd},
\]
then
\[
\sum_{\theta_j \in \esupp{a}} \ell_j \neq 0.
\]
\end{enumerate}
\end{theorem}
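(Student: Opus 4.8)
The plan is to dispose of the necessity of condition~(a) immediately: the earlier lemma already shows that pretty good state transfer between $\ket{a}$ and $\ket{b}$ forces $a$ and $b$ to be strongly cospectral, so for the rest of the argument I would assume (a) and write $E_j \ket{a} = \sigma_j E_j \ket{b}$ with $\sigma_j \in \{+1, -1\}$. Setting $r_j = \bra{a} E_j \ket{a}$, strong cospectrality together with cospectrality gives $\bra{b} E_j \ket{a} = \sigma_j \bra{b} E_j \ket{b} = \sigma_j r_j$, where $r_j > 0$ precisely when $\theta_j \in \esupp{a}$ and $\sum_{\theta_j \in \esupp{a}} r_j = 1$. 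Expanding $U(t)$ via the spectral decomposition then yields
\[
\bra{b} U(t) \ket{a} = \sum_{\theta_j \in \esupp{a}} \sigma_j r_j e^{i t \theta_j} = \sum_{\theta_j \in \esupp{a}} r_j \left( \sigma_j e^{i t \theta_j} \right),
\]
a convex combination of the unit-modulus numbers $\sigma_j e^{i t \theta_j}$.

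The next step is to recast the fidelity requirement as a phase-alignment problem. The triangle inequality gives $|\bra{b} U(t) \ket{a}| \le 1$, and because the weights $r_j$ are strictly positive on $\esupp{a}$, the modulus can be pushed within $\epsilon$ of $1$ exactly when all of the unit vectors $\sigma_j e^{i t \theta_j}$ can be made to lie within a small common arc. Writing $\sigma_j = e^{i \pi \zeta_j}$ (using $\zeta_j = (1 - \sigma_j)/2 \in \{0,1\}$), the phase of the $j$th term is $t \theta_j + \pi \zeta_j$, so after fixing a reference eigenvalue $\theta_1 \in \esupp{a}$ and subtracting its phase, pretty good state transfer is equivalent to the solvability, for every $\epsilon$, of the system
\[
\left| (\theta_j - \theta_1) t - \pi (\zeta_j - \zeta_1) \right| < \epsilon \pmod{2 \pi}, \qquad \theta_j \in \esupp{a} \setminus \{\theta_1\}.
\]
I would record here that the fidelity tolerance and the phase tolerance are related monotonically and continuously, so these two ``for all $\epsilon$'' statements genuinely correspond (and the representative of the target mod $2\pi$ is immaterial since $\zeta_j - \zeta_1$ is an integer).

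Finally I would apply Kronecker's Theorem (Theorem~\ref{KT}) to this system, with frequencies $\theta_j - \theta_1$ and targets $\pi(\zeta_j - \zeta_1)$: solvability holds if and only if every integer relation $\sum_{j \ge 2} m_j (\theta_j - \theta_1) = 0$ forces $\pi \sum_{j \ge 2} m_j (\zeta_j - \zeta_1) \equiv 0 \pmod{2\pi}$. The remaining work, which I expect to be the main obstacle, is the bookkeeping that converts this reference-based statement into the symmetric condition~(b). Setting $\ell_j = m_j$ for $j \ge 2$ and $\ell_1 = -\sum_{j \ge 2} m_j$, the frequency relation becomes exactly $\sum_{\theta_j \in \esupp{a}} \ell_j \theta_j = 0$ together with the constraint $\sum_{\theta_j \in \esupp{a}} \ell_j = 0$, and a short computation shows $\sum_{j \ge 2} m_j (\zeta_j - \zeta_1) = \sum_{\theta_j \in \esupp{a}} \ell_j \zeta_j$, so the Kronecker conclusion reduces mod $2\pi$ to the parity statement that $\sum \ell_j \zeta_j$ is even. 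Thus pretty good state transfer fails precisely when some integer tuple satisfies $\sum \ell_j \theta_j = 0$, $\sum \ell_j = 0$, and $\sum \ell_j \zeta_j$ odd, which is exactly the negation of condition~(b). Care is needed to keep the $\sum \ell_j = 0$ clause attached throughout (it is the origin of the ``$\sum \ell_j \neq 0$'' in the statement) and to track the passage from congruence mod $2\pi$ to parity mod $2$; once these are handled, combining with the necessity of (a) completes the equivalence.
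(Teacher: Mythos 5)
Your argument is correct, and it is worth noting that the paper itself does not prove this statement (it is quoted from Banchi et al.); the closest in-paper comparison is the proof of Theorem~\ref{thm:KT-path}, the analogue for mirror-symmetric states on paths. Both arguments share the same skeleton --- reduce $\bra{b}U(t)\ket{a}$ via the spectral decomposition and strong cospectrality to a convex combination $\sum_j r_j \sigma_j e^{it\theta_j}$, turn the fidelity requirement into simultaneous phase alignment, and invoke Kronecker's Theorem --- but they diverge in how the unknown global phase is handled. The paper keeps the full system $|\theta_j \tau - (\delta + \zeta_j\pi)| < \epsilon \pmod{2\pi}$ with $\delta$ a free parameter, and its sufficiency direction then requires a genuinely delicate step: exhibiting a single $\delta$ (of the form $2^{-r}\pi$, via a $2$-adic comparison of $\sum \ell_j$ across different relations) that satisfies the Kronecker compatibility condition for \emph{every} integer relation simultaneously. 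You instead quotient out the global phase by subtracting the phase of a reference eigenvalue $\theta_1$, so Kronecker applies directly to the frequencies $\theta_j - \theta_1$ and targets $\pi(\zeta_j - \zeta_1)$; the change of variables $\ell_1 = -\sum_{j\ge 2} m_j$ then makes the constraint $\sum \ell_j = 0$ appear automatically, and condition~(b) drops out as the exact contrapositive. This buys a cleaner sufficiency direction at no cost in generality, since the correspondence between tuples $\{m_j\}_{j\ge 2}$ and tuples $\{\ell_j\}$ with $\sum\ell_j = 0$ is a bijection. The only points you should make fully explicit in a final write-up are (i) the quantitative equivalence between $|\sum_j r_j \sigma_j e^{it\theta_j}| > 1-\epsilon$ and the phase-clustering condition, which depends on $r_{\min} = \min\{\bra{a}E_j\ket{a} : \theta_j \in \esupp{a}\} > 0$, and (ii) that strong cospectrality of \emph{vertices} gives real $\sigma_j = \pm 1$ (as the paper remarks), which is what makes $\zeta_j = (1-\sigma_j)/2 \in \{0,1\}$ and the reduction from congruence mod $2\pi$ to parity legitimate. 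Neither is a gap, just bookkeeping you have already flagged.
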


\begin{theorem} \label{thm:KT-path}
Let $\ket{\mathbf{v}}$ be a state of a path $P_n$ and let $\zeta_j = (1 + (-1)^j) / 2$.  Then there is pretty good state transfer between states $\ket{\mathbf{v}}$ and $\ket{\mathbf{v}^\sigma}$ if and only if for every set of integers $\{\ell_j\}$ such that
\[
\sum_{\theta_j \in \esupp{\ket{\mathbf{v}}}} \ell_j \theta_j = 0 \mbox{ and } \sum_{\theta_j \in \esupp{\ket{\mathbf{v}}}} \ell_j \zeta_j \mbox{ is odd},
\]
then
\[
\sum_{\theta_j \in \esupp{\ket{\mathbf{v}}}} \ell_j \neq 0.
\]
\end{theorem}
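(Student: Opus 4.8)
The plan is to convert the fidelity condition into a system of simultaneous approximation inequalities and apply Kronecker's Theorem (Theorem~\ref{KT}); the one genuinely delicate point is an overall free phase, which is exactly what produces the side condition $\sum_j \ell_j \neq 0$. First I would expand the amplitude spectrally. Writing $U(\tau) = \sum_j e^{i\theta_j\tau}E_j$ and using Lemma~\ref{lem:path-sc} in the form $E_j\ket{\mathbf{v}} = (-1)^{j+1}E_j\ket{\mathbf{v}^\sigma}$ together with $E_j = E_j^\dagger E_j$, one gets $\bra{\mathbf{v}^\sigma}E_j\ket{\mathbf{v}} = (-1)^{j+1}\norm{E_j\ket{\mathbf{v}}}^2$. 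Setting $p_j = \bra{\mathbf{v}}E_j\ket{\mathbf{v}} = \norm{E_j\ket{\mathbf{v}}}^2$, so that $p_j > 0$ precisely when $\theta_j \in \esupp{\ket{\mathbf{v}}}$ and $\sum_j p_j = 1$, and writing $(-1)^{j+1} = e^{i\pi\zeta_j}$ (valid since $\zeta_j \in \{0,1\}$), this yields
\[
\bra{\mathbf{v}^\sigma}U(\tau)\ket{\mathbf{v}} = \sum_{\theta_j \in \esupp{\ket{\mathbf{v}}}} p_j\, e^{i(\theta_j\tau + \pi\zeta_j)}.
\]
Because the $p_j$ are positive, sum to $1$, and each exponential is a unit vector, the modulus of this convex combination can be made to exceed $1-\epsilon$ if and only if the phases $\theta_j\tau + \pi\zeta_j$, over indices $j$ with $\theta_j \in \esupp{\ket{\mathbf{v}}}$, can be made to lie within a small arc of one another modulo $2\pi$. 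Thus pretty good state transfer is equivalent to: for every $\epsilon' > 0$ there is a $\tau$ making all these phases mutually $\epsilon'$-close modulo $2\pi$.

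Next I would eliminate the undetermined common phase. Fixing a reference index $j_0$ with $\theta_{j_0} \in \esupp{\ket{\mathbf{v}}}$ (the support is nonempty), ``all phases mutually close'' is equivalent to ``all phases close to the $j_0$ phase,'' i.e.\ to solvability, for every $\epsilon'$, of
\[
\left| (\theta_j - \theta_{j_0})\tau - \pi(\zeta_{j_0} - \zeta_j) \right| < \epsilon' \pmod{2\pi}, \qquad \theta_j \in \esupp{\ket{\mathbf{v}}}.
\]
Applying Theorem~\ref{KT} with frequencies $\theta_j - \theta_{j_0}$ and targets $\pi(\zeta_{j_0} - \zeta_j)$, this is solvable for all $\epsilon'$ if and only if every integer tuple $\{m_j\}$ with $\sum_j m_j(\theta_j - \theta_{j_0}) = 0$ satisfies $\sum_j m_j\,\pi(\zeta_{j_0} - \zeta_j) \equiv 0 \pmod{2\pi}$. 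The $j_0$-row here is trivial (its frequency and target both vanish), so it may be dropped and the condition quantifies over $\{m_j\}_{j \neq j_0}$.

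Finally I would rewrite this Kronecker condition in the stated form. Given $\{m_j\}_{j \neq j_0}$, define $\ell_j = m_j$ for $j \neq j_0$ and $\ell_{j_0} = -\sum_{j \neq j_0} m_j$. A short computation gives $\sum_j \ell_j = 0$, while $\sum_j \ell_j\theta_j = \sum_{j\neq j_0} m_j(\theta_j - \theta_{j_0})$ and $\sum_j \ell_j\zeta_j = \sum_{j\neq j_0} m_j(\zeta_j - \zeta_{j_0})$, so the Kronecker conclusion becomes ``$\sum_j \ell_j\zeta_j$ is even.'' Since $\{m_j\}_{j \neq j_0} \mapsto \{\ell_j\}_j$ is a bijection onto the integer tuples with $\sum_j \ell_j = 0$, the condition reads: every integer tuple $\{\ell_j\}$ with $\sum_j \ell_j\theta_j = 0$ and $\sum_j \ell_j = 0$ has $\sum_j \ell_j\zeta_j$ even. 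Taking the contrapositive gives precisely the theorem: whenever $\sum_j \ell_j\theta_j = 0$ and $\sum_j \ell_j\zeta_j$ is odd, then $\sum_j \ell_j \neq 0$. Note the resulting condition is independent of the reference $j_0$, as the statement demands.

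I expect the main obstacle to be the phase-elimination step. Two things need care: justifying rigorously that the modulus of the convex combination tends to $1$ exactly when the phases cluster (a strict-convexity argument exploiting $p_j > 0$), and, more importantly, tracking how the freedom in the common phase turns into the extra constraint $\sum_j \ell_j = 0$. This is the one place where the argument genuinely departs from a naive application of Kronecker's Theorem, and it is the source of the hypothesis distinguishing this statement from condition (b) of the Banchi et al.\ theorem, where strong cospectrality must be imposed separately but is here automatic by Lemma~\ref{lem:path-sc}.
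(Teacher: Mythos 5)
Your proposal is correct, and it reaches the theorem by a genuinely different route at the one point where the argument is delicate. The paper keeps the unknown overall phase $\delta$ explicit: it sets up the system $|\theta_j\tau - (\delta + \zeta_j\pi)| < \epsilon \pmod{2\pi}$ over $\esupp{\ket{\mathbf{v}}}$, and in the forward direction must \emph{construct} a working $\delta$ (taken to be $2^{-r}\pi$, where $2^r$ is the largest power of $2$ dividing $\sum_j \ell_j$ for a witness relation with $\sum_j \ell_j\zeta_j$ odd) and then verify the Kronecker congruence for every other relation by a parity case analysis on $2$-adic valuations. You instead eliminate $\delta$ before invoking Kronecker, by differencing all phases against a reference eigenvalue $\theta_{j_0}$ in the support; the relation lattice of the differences $\{\theta_j - \theta_{j_0}\}$ corresponds exactly to integer tuples $\{\ell_j\}$ with $\sum_j \ell_j = 0$, the congruence collapses to ``$\sum_j \ell_j\zeta_j$ is even,'' and the stated condition is the contrapositive. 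Your bookkeeping here is right: the bijection $\{m_j\}_{j\neq j_0} \mapsto \{\ell_j\}$ onto zero-sum tuples, the identities $\sum_j\ell_j\theta_j = \sum_{j\neq j_0}m_j(\theta_j-\theta_{j_0})$ and $\sum_j\ell_j\zeta_j = \sum_{j\neq j_0}m_j(\zeta_j-\zeta_{j_0})$, and the independence of the choice of $j_0$ all check out, and your reduction of the fidelity condition to phase clustering via $\bra{\mathbf{v}^\sigma}E_j\ket{\mathbf{v}} = (-1)^{j+1}\bra{\mathbf{v}}E_j\ket{\mathbf{v}}$ (Lemma~\ref{lem:path-sc} plus $E_j = E_j^\dagger E_j$) is sound, since $\min\{\bra{\mathbf{v}}E_j\ket{\mathbf{v}} : \theta_j\in\esupp{\ket{\mathbf{v}}}\}$ is a fixed positive constant. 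The trade-off: the paper's argument exhibits the phase $e^{i\delta}$ with which the mirrored state is reached, which your approach leaves implicit, while your approach avoids the $\delta$-construction and case analysis entirely and handles both directions of the equivalence in a single chain; it also makes transparent why the extra hypothesis $\sum_j\ell_j\neq 0$ (rather than the bare Kronecker congruence) is the right condition, namely that it is exactly the freedom in the global phase.
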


\begin{proof}
First suppose the condition on the sets of integers $\{\ell_j\}$ is satisfied.  We consider the system of inequalities
\[
| \theta_j \tau - (\delta + \zeta_j \pi)| < \epsilon \pmod{2 \pi}, \qquad (\theta_j \in \esupp{\ket{\mathbf{v}}}).
\]
Let $\{\ell_j\}$ be a set of integers such that
\[
\sum_{\theta_j \in \esupp{\ket{\mathbf{v}}}} \ell_j \theta_j = 0.
\]
Then we desire
\begin{equation} \tag{$\dagger$}
\sum_{\theta_j \in \esupp{\ket{\mathbf{v}}}} \ell_j (\delta + \zeta_j \pi) \equiv 0 \pmod{2 \pi}.
\end{equation}

We need to show that there exists a $\delta$ such that the above equation is true for all sets $\{\ell_j\}$.  If $\sum \ell_j \zeta_j$ is even for every set of integers $\{\ell_j\}$, then we may choose $\delta = 0$.  Otherwise, suppose for some set of integers $\{\ell_j\}$ that $\sum \ell_j \zeta_j$ is odd, in which case let $\alpha := \sum \ell_j \neq 0$, and let $\delta$ be such that ($\dagger$) is satisfied for this set.  Suppose $\{\ell'_j\}$ is also a set of integers such that
\[
\sum_{\theta_j \in \esupp{\ket{\mathbf{v}}}} \ell'_j \theta_j = 0.
\]
Let $\alpha' := \sum \ell'_j$, let $r, s$ be nonnegative integers such that $2^r$ is the largest power of 2 that divides $\alpha$ and $2^s$ is the largest power of 2 that divides $\alpha'$, and let $t = \min\{r, s\}$.  Take $\delta = 2^{-r} \pi$.  Then we construct the set of integers $\gamma_j = 2^{-t} (\alpha' \ell_j = \alpha \ell'_j)$.  It is a straightforward calculation that $\sum \gamma_j \theta_j = 0$ and $\sum \gamma_j = 0$.  Now consider the expression
\begin{align*}
\sum_{\theta_j \in \esupp{\ket{\mathbf{v}}}} \gamma_j \zeta_j &= \sum_{\theta_j \in \esupp{\ket{\mathbf{v}}}} 2^{-t} (\alpha' \ell_j - \alpha \ell'_j) \zeta_j = (2^{-t} \alpha') \left( \sum_{\theta_j \in \esupp{\ket{\mathbf{v}}}} \ell_j \zeta_j \right) - (2^{-t} \alpha) \left( \sum_{\theta_j \in \esupp{\ket{\mathbf{v}}}} \ell'_j \zeta_j \right).
\end{align*}
We observe that $\sum \gamma_j \zeta_j$ is even, as otherwise we have a contradiction to our hypothesis.  Moreover, $\sum \ell_j \zeta_j$ is odd by our assumption, and at least one of $2^{-t} \alpha$ and $2^{-t} \alpha'$ is odd by definition of $t$.  If $2^{-t} \alpha'$ is odd, then both $2^{-t} \alpha$ and $\sum \ell'_j \zeta_j$ are odd, so $r = s$ and ($\dagger$) is satisfied.  Otherwise, $2^{-t} \alpha'$ is even, so $s > r$, $2^{-t} \alpha$ is odd and $\sum \ell'_j \zeta_j$ is even; thus ($\dagger$) is satisfied.  Therefore, by Kronecker's Theorem (Theorem~\ref{KT}), the system of inequalities
\[
| \theta_j \tau - (\delta + \zeta_j \pi)| < \epsilon \pmod{2 \pi}, \qquad (\theta_j \in \esupp{\ket{\mathbf{v}}})
\]
admits a solution $\tau_0$ for $\tau$.  Hence we obtain
\[
U(\tau_0) = \sum_{\theta_j \in \esupp{\ket{\mathbf{v}}}} e^{i \tau_0 \theta_j} E_j = \sum_{\theta_j \in \esupp{\ket{\mathbf{v}}}} (1 - \epsilon') e^{i \delta} \sigma_j E_j
\]
and so $U(\tau_0) \ket{\mathbf{v}} \approx e^{i \delta} \ket{\mathbf{v}^\sigma}$, and hence we have pretty good state transfer between states $\ket{\mathbf{v}}$ and $\ket{\mathbf{v}^\sigma}$.

Conversely, suppose that pretty good state transfer occurs between $\ket{\mathbf{v}}$ and $\ket{\mathbf{v}^\sigma}$.  Applying Lemma~\ref{lem:path-sc}, we see that for some $\tau$, we have
\[
\begin{array}{rclcl}
U(\tau) \ket{\mathbf{v}} &\approx& e^{i \delta} \ket{\mathbf{v}^\sigma}, \\
e^{i \theta_j \tau} E_j \ket{\mathbf{v}} &\approx& e^{i \delta} E_j \ket{\mathbf{v}^\sigma}, &\qquad& (\theta_j \in \esupp{\ket{\mathbf{v}}}), \\
e^{i \theta_j \tau} E_j \ket{\mathbf{v}} &\approx& (-1)^{j + 1} e^{i \delta} E_j \ket{\mathbf{v}} &\qquad& (\theta_j \in \esupp{\ket{\mathbf{v}}}), \\
\theta_j &\approx& \delta + \zeta_j \pi \pmod{2 \pi}, &\qquad& (\theta_j \in \esupp{\ket{\mathbf{v}}}).
\end{array}
\]
So by Kronecker's Theorem (Theorem~\ref{KT}), for every set of integers $\{\ell_j\}$ such that $\ell_j \theta_j = 0$, we have
\[
\sum_{\theta_j \in \esupp{\ket{\mathbf{v}}}} \ell_j (\delta + \zeta_j \pi) \equiv 0 \pmod{2 \pi}.
\]
It follows that if $\sum \ell_j \zeta_j$ is odd, then $\sum \ell_j$ cannot be zero, or the above condition is not satisfied, which completes the proof.
\end{proof}

Next, we provide our key lemma, which uses cyclotomic polynomials to draw conclusions about the possible linear combinations of eigenvalues that equal zero, which will aid us in applying Kronecker's Theorem to derive our main results.

\begin{lemma} \label{lem:cyclo}
Let $m$ be a positive integer of the form $2^t p^s$, where $p$ is an odd prime and $s, t \in \N$, and let $\theta_j = 2 \cos (j \pi / m)$, $1 \le j < m$.  If there is a linear combination satisfying
\[
\sum_{j = 1}^{m - 1} \ell_j \theta_j = 0,
\]
where each $\ell_j$ is an integer, then if $1 \le j \le m - m / p$, and we let $j := q (m / p) + r$, $0 \le r < m /p$, we have
\[
\ell_j = \begin{cases}
\ell_{m - j} + (-1)^q (\ell_{m - m/p + r} - \ell_{m/p - r}), & r \neq 0; \\
\ell_{m - j}, & r = 0.
\end{cases}
\]
\end{lemma}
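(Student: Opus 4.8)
The plan is to translate the integer relation among the $\theta_j$ into a vanishing sum of roots of unity and then exploit the factorization of the relevant cyclotomic polynomial. Write $\zeta = e^{i\pi/m}$, a primitive $2m$-th root of unity, so that $\theta_j = \zeta^j + \zeta^{-j} = \zeta^j + \zeta^{2m-j}$ and $\zeta^m = -1$. Substituting into $\sum_{j=1}^{m-1}\ell_j\theta_j = 0$ produces a vanishing integer linear combination of $\zeta^1,\ldots,\zeta^{2m-1}$ whose coefficient vector is palindromic (the coefficient of $\zeta^k$ equals that of $\zeta^{2m-k}$). Using $\zeta^{m+k} = -\zeta^k$ to fold the exponents $m+1,\ldots,2m-1$ back into the range $1,\ldots,m-1$, I would reduce the relation to $\sum_{k=0}^{m-1} d_k \zeta^k = 0$, where $d_k := \ell_k - \ell_{m-k}$ (with $d_0 = 0$). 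This collapse already explains why the conclusion involves only the differences $\ell_k - \ell_{m-k}$, and hence the terms $\ell_{m-j}$.

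Next I would descend through the field tower $\mathbb{Q} \subseteq \mathbb{Q}(\omega) \subseteq \mathbb{Q}(\zeta)$, where $\omega := \zeta^{m/p}$ is a primitive $2p$-th root of unity. Because $m = 2^t p^s$, the radical of $2m$ is $2p$ and $2m/\mathrm{rad}(2m) = m/p$, so $\Phi_{2m}(x) = \Phi_{2p}(x^{m/p})$; comparing degrees gives $[\mathbb{Q}(\zeta):\mathbb{Q}] = \varphi(2m) = (p-1)(m/p)$ while $[\mathbb{Q}(\omega):\mathbb{Q}] = \varphi(2p) = p-1$, whence $[\mathbb{Q}(\zeta):\mathbb{Q}(\omega)] = m/p$. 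Since $\zeta^{m/p} = \omega$, the polynomial $x^{m/p} - \omega$ is the minimal polynomial of $\zeta$ over $\mathbb{Q}(\omega)$, so $\{1,\zeta,\ldots,\zeta^{m/p - 1}\}$ is a $\mathbb{Q}(\omega)$-basis of $\mathbb{Q}(\zeta)$. Writing $k = qN + r$ with $N := m/p$, $0 \le r < N$, $0 \le q < p$, and grouping the folded relation by the residue $r$ gives $\sum_{r=0}^{N-1}\bigl(\sum_{q=0}^{p-1} d_{qN+r}\,\omega^q\bigr)\zeta^r = 0$; linear independence of the $\zeta^r$ over $\mathbb{Q}(\omega)$ then forces $\sum_{q=0}^{p-1} d_{qN+r}\,\omega^q = 0$ for each $r$.

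Finally I would solve each of these relations in $\omega$. The minimal polynomial of $\omega$ is $\Phi_{2p}(y) = \sum_{q=0}^{p-1}(-1)^q y^q$, of degree $p-1$; since $\sum_{q=0}^{p-1} d_{qN+r}\,y^q$ has degree at most $p-1$ and vanishes at $\omega$, it must be a scalar multiple of $\Phi_{2p}(y)$, whence $d_{qN+r} = \lambda_r(-1)^q$ for a constant $\lambda_r$. For $r = 0$ the term $q = 0$ gives $\lambda_0 = d_0 = 0$, yielding $\ell_j = \ell_{m-j}$; for $r \neq 0$ the term $q = p-1$ (even, as $p$ is odd) gives $\lambda_r = d_{(p-1)N + r} = \ell_{m - m/p + r} - \ell_{m/p - r}$, and substituting back into $\ell_{qN+r} - \ell_{m - qN - r} = d_{qN+r} = (-1)^q\lambda_r$ produces exactly the claimed recurrence.

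The main obstacle I anticipate is the middle step: establishing that $\{1,\zeta,\ldots,\zeta^{m/p-1}\}$ is genuinely a basis over $\mathbb{Q}(\omega)$ — equivalently, that $x^{m/p} - \omega$ is irreducible over $\mathbb{Q}(\omega)$. Everything else is bookkeeping, but this independence is what licenses separating the relation residue-by-residue, and it relies essentially on the hypothesis $m = 2^t p^s$ through the identity $\Phi_{2m}(x) = \Phi_{2p}(x^{m/p})$ and the two Euler-$\varphi$ computations. Some care is also needed with the boundary indices, namely checking that $m/p - r$ and $m - m/p + r$ lie in $\{1,\ldots,m-1\}$ and that the excluded values $0$ and $m$ are absorbed by $d_0 = 0$, but these are routine once the basis is in place.
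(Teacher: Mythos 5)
Your argument is correct, and it reaches the same coefficient identities as the paper but by a different route. The paper also starts from the observation that $\zeta_{2m}$ is a root of the palindromic polynomial $P(x) = \sum_{j=1}^{m-1}\ell_j x^j + \sum_{j=m+1}^{2m-1}\ell_{2m-j}x^j$, so that $\Phi_{2m}(x) = \sum_{k=0}^{p-1}(-1)^k x^{km/p}$ divides $P(x)$; but it then \emph{explicitly constructs} the quotient $Q(x)$, verifies by a degree count and coefficient matching on $[x^j]$ for $0 \le j \le m + m/p - 1$ that this $Q$ must be the true quotient, and finally reads the claimed identities off the coefficients of $x^{2m-j}$ in $P = \Phi_{2m}Q$. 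You instead fold the relation through $\zeta^m = -1$ to get $\sum_k (\ell_k - \ell_{m-k})\zeta^k = 0$, descend through the tower $\mathbb{Q} \subseteq \mathbb{Q}(\omega) \subseteq \mathbb{Q}(\zeta)$ with $\omega = \zeta^{m/p}$, separate the relation residue class by residue class using the power basis $\{1,\zeta,\dots,\zeta^{m/p-1}\}$, and solve each small relation against $\Phi_{2p}$. The obstacle you flag is not actually an obstacle: your own degree computation $[\mathbb{Q}(\zeta):\mathbb{Q}(\omega)] = \varphi(2m)/\varphi(2p) = m/p$ shows that the monic degree-$(m/p)$ polynomial $x^{m/p}-\omega$, which vanishes at $\zeta$, is forced to be the minimal polynomial, hence irreducible, so the basis claim is immediate. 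What your approach buys is that you never have to guess or verify an explicit quotient polynomial, and the conclusion decomposes transparently into $m/p$ independent relations (one per residue $r$), with the $r=0$ case ($\lambda_0 = d_0 = 0$) and the $r\neq 0$ case (leading coefficient $\lambda_r = \ell_{m-m/p+r}-\ell_{m/p-r}$, using that $p-1$ is even) falling out cleanly; what the paper's approach buys is that it stays entirely within polynomial arithmetic over $\mathbb{Z}$, needing nothing beyond the fact that the minimal polynomial divides.
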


\begin{proof}
Notice that each $\theta_j$ is of the form $\theta_j = \zeta_{2m}^j + \zeta_{2m}^{-j}$, where $\zeta_{2m}$ is a $2m$-th root of unity.  Hence, every $\theta_j$ belongs to the cyclotomic field $\Q(\zeta_{2m})$.  The cyclotomic polynomial is
\[
\Phi_{2m}(x) = \sum_{k = 0}^{p - 1} (-1)^k x^{k m / p}
\]
and we define the polynomial $P(x)$ as follows:
\[
P(x) = \sum_{j = 1}^{m - 1} \ell_j x^j + \sum_{m + 1}^{2m - 1} \ell_{2m - j} x^j.
\]
We see that $\zeta_{2m}$ is a root of $P(x)$ and, since $\Phi_{2m}(x)$ is the minimal polynomial of $\zeta_{2m}$, we see that $\Phi_{2m}(x)$ divides $P(x)$.

Let $Q(x)$ be the following polynomial:
\begin{align*}
Q(x) = \sum_{j = 1}^{m / p} \ell_j x^j &+ \sum_{j = m/p + 1}^{m - 1} (\ell_j + \ell_{j - m/p}) x_j + \ell_{m - m / p} x^m + \sum_{j = 1}^{m / p - 1} (\ell_{m - j} + \ell_{m - m / p + j} - \ell_j) x^{m + j}.
\end{align*}
Now, as the degree of $Q(x)$ is $m + m / p - 1$, and
\[
[x^j] \Phi_{2m}(x) Q(x) = [x^j] P(x), \quad 0 \le j \le m + m / p - 1,
\]
we conclude that $Q(x)$ is the unique such polynomial, and since the quotient $P(x) / \Phi_{2m}(x)$ also has this property, it follows that $P(x) = \Phi_{2m}(x) Q(x)$.  Hence, from the coefficients of $x^{2m - j}$ for $1 \le j \le m - m / p$, we have
\[
\ell_j = \begin{cases}
\ell_{m - j} + (-1)^q (\ell_{m - m/p + r} - \ell_{m / p - r}), & r \neq 0; \\
\ell_{m - j}, & r = 0,
\end{cases}
\]
as desired.
\end{proof}

If $\ket{\mathbf{v}}$ is such that $\beta_x = 0$ either for all even $x$ or for all odd $x$, we say $\ket{\mathbf{v}}$ is a \emph{parity state}.  We demonstrate a symmetry property for the eigenvalue support of parity states of paths.  

\begin{lemma} \label{lem:parity}
For $P_n$, let $\ket{\mathbf{v}}$ be a parity state.  Then $\theta_j \in \esupp{\ket{\mathbf{v}}}$ if and only if $\theta_{n + 1 - j} \notin \esupp{\ket{\mathbf{v}}}$.
\end{lemma}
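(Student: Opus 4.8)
The plan is to reduce membership in $\esupp{\ket{\mathbf v}}$ to the (non)vanishing of a single overlap coefficient, then analyze that coefficient under the reflection $j \mapsto n+1-j$. Writing $\ket{\mathbf v} = \sum_x \beta_x \ket{x}$ and using the explicit path idempotents, I would first record
\[
\bra{k} E_j \ket{\mathbf v} = \frac{2}{n+1}\sin\!\left(\frac{kj\pi}{n+1}\right) S_j, \qquad S_j := \sum_{x=1}^{n}\beta_x \sin\!\left(\frac{xj\pi}{n+1}\right).
\]
Since $\sin(kj\pi/(n+1)) \neq 0$ already for $k=1$ and every $1 \le j \le n$, we have $E_j\ket{\mathbf v} \neq \ket{\mathbf 0}$ exactly when $S_j \neq 0$. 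Thus $\theta_j \in \esupp{\ket{\mathbf v}}$ iff $S_j \neq 0$, and the entire lemma is a statement about the zero pattern of the sequence $(S_j)_{j=1}^{n}$.

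The key computation is a reflection identity. From $\sin\!\big(\tfrac{x(n+1-j)\pi}{n+1}\big) = \sin\!\big(x\pi - \tfrac{xj\pi}{n+1}\big) = (-1)^{x+1}\sin\!\big(\tfrac{xj\pi}{n+1}\big)$ and the parity hypothesis (every $x$ with $\beta_x \neq 0$ satisfies $x \equiv p \pmod 2$ for a common $p \in \{0,1\}$), each surviving factor $(-1)^{x+1}$ equals the single constant $(-1)^{p+1}$, so
\[
S_{n+1-j} = (-1)^{p+1}\, S_j \qquad (1 \le j \le n).
\]
I would handle the fixed point separately: when $n$ is odd and the support has even parity, $S_{(n+1)/2} = \sum_x \beta_x \sin(x\pi/2) = 0$ automatically, so $\theta_{(n+1)/2} \notin \esupp{\ket{\mathbf v}}$.

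The main obstacle is precisely that the reflection identity carries a \emph{constant} sign. Because $S_{n+1-j} = \pm S_j$ with the sign independent of $j$, we get $S_{n+1-j} = 0$ iff $S_j = 0$; hence $\theta_j$ and $\theta_{n+1-j}$ enter $\esupp{\ket{\mathbf v}}$ \emph{together}, never complementarily. Two structural facts confirm that the ``$\notin$'' form as worded cannot be intended: for odd $n$ the index $j=(n+1)/2$ is its own reflection, so the worded biconditional degenerates to ``$\theta_{(n+1)/2}\in\esupp{\ket{\mathbf v}}$ iff $\theta_{(n+1)/2}\notin\esupp{\ket{\mathbf v}}$,'' which is never satisfiable; and concretely, for $\ket{\mathbf v}=\ket1$ on $P_3$ one computes $S_1 = S_3 = \tfrac{1}{\sqrt2}$ and $S_2 = 1$, giving $\esupp{\ket{\mathbf v}} = \{\theta_1,\theta_2,\theta_3\}$, which violates ``$\notin$'' at every index. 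The displayed identity $S_{n+1-j} = (-1)^{p+1}S_j$ therefore proves the \emph{symmetric} statement ``$\theta_j \in \esupp{\ket{\mathbf v}}$ if and only if $\theta_{n+1-j} \in \esupp{\ket{\mathbf v}}$,'' and I would propose replacing $\notin$ by $\in$ in the lemma. This corrected form is exactly what Example~\ref{ex:P11} relies on: there the support of $\tfrac{1}{\sqrt2}(\ket1+\ket3)$ on $P_{11}$ is the symmetric set $\{\theta_j : j \neq 6\}$, and the reflected inequalities can be paired off precisely because both members of each pair $\{\theta_j,\theta_{12-j}\}$ lie in the support. With this correction the lemma is immediate from the reflection identity, the even-parity middle case being covered by the separate vanishing above.
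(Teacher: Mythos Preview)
Your analysis is correct, and you have caught a typo in the lemma statement: the paper's own proof establishes precisely the symmetric version you propose, namely $\theta_j \in \esupp{\ket{\mathbf v}} \iff \theta_{n+1-j} \in \esupp{\ket{\mathbf v}}$. The paper's proof begins ``It suffices to prove for all $j$ that if $\theta_j \notin \esupp{\ket{\mathbf v}}$, then $\theta_{n+1-j} \notin \esupp{\ket{\mathbf v}}$,'' which is the contrapositive of one direction of the $\in$-version, not the $\notin$-version; and every application of the lemma (in the proofs of Theorems~\ref{thm-odd} and the subsequent theorem) uses it in exactly this form, inferring $\theta_{m-j}\notin\esupp{\ket{\mathbf v}}$ from $\theta_j\notin\esupp{\ket{\mathbf v}}$. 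Your counterexamples against the $\notin$-wording are on point.

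As for the argument itself, your approach and the paper's coincide. The paper computes $\bra{x}E_{n+1-j}\ket{\mathbf v}$ directly, expands $\sin\!\big(\tfrac{x(n+1-j)\pi}{n+1}\big)$ and $\sin\!\big(\tfrac{y(n+1-j)\pi}{n+1}\big)$ via the same identity you use, and obtains $\pm\bra{x}E_j\ket{\mathbf v}$ after the parity hypothesis kills the $y$-dependence of the sign. Your version is a slight streamlining: by first isolating the scalar $S_j$ and noting that $E_j\ket{\mathbf v}\neq\ket{\mathbf 0}$ iff $S_j\neq 0$, you reduce to the single identity $S_{n+1-j}=(-1)^{p+1}S_j$, which makes both the symmetry and the failure of the $\notin$-wording transparent. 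Neither argument needs the separate treatment of the fixed point for the corrected statement, but your observation that the fixed point is what immediately falsifies the $\notin$-wording is a clean diagnostic.
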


\begin{proof}
It suffices to prove for all $j$ that if $\theta_j \notin \esupp{\ket{\mathbf{v}}}$, then $\theta_{n + 1 - j} \notin \esupp{\ket{\mathbf{v}}}$.  Let $\theta_j \notin \esupp{\ket{\mathbf{v}}}$.  Then for each $x \in V(P_n)$, we have
\[
\bra{x} E_j \ket{\mathbf{v}} = \frac{2}{n + 1} \left( \sin \frac{x j \pi}{n + 1} \right) \sum_{y \in V(P_n)} \beta_y \left( \sin \frac{y j \pi}{n + 1} \right) = 0.
\]
Thus
\begin{align*}
\bra{x} E_{n + 1 - j} \ket{\mathbf{v}} &= \frac{2}{n + 1} \left( \sin \frac{x (n + 1 - j) \pi}{n + 1} \right) \sum_{y \in V(P_n)}\beta_y \left( \sin \frac{y (n + 1 - j) \pi}{n + 1} \right) \\
&= \frac{2}{n + 1} \left( -\cos(x \pi) \sin \frac{x j \pi}{n + 1} \right) \sum_{y \in V(P_n)} \beta_y \left(-\cos(y \pi) \sin \frac{y j \pi}{n + 1} \right) \\
&= \pm \frac{2}{n + 1} \left( \sin \frac{x j \pi}{n + 1} \right) \sum_{y \in V(P_n)} \beta_y \left( \sin \frac{y j \pi}{n + 1} \right) = 0.
\end{align*}
Hence, $\theta_{n + 1 - j} \notin \esupp{\ket{\mathbf{v}}}$ as desired.
\end{proof}

Finally, we will make use of the following trigonometric identity to prove our main results; a derivation can be found in~\cite{vB19}.

\begin{lemma} \label{lem:trig}
Let $n = km$, where $k$ is a positive integer and $m > 1$ is an odd integer, and $0 \le a < k$ be an integer.  Then
\[
\sum_{j = 0}^{m - 1} (-1)^j \cos \left( \frac{ (a + jk)\pi}{n} \right) = 0.
\]
\end{lemma}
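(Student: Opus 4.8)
The plan is to convert the cosine sum into the real part of a geometric series of complex exponentials and show that series vanishes. First I would write $\cos\theta = \mathrm{Re}(e^{i\theta})$ and use $n = km$ to split the argument as
\[
\frac{(a + jk)\pi}{n} = \frac{a\pi}{km} + \frac{j\pi}{m}.
\]
Setting $\phi = a\pi/(km)$, the sum becomes $\mathrm{Re}\bigl(e^{i\phi} \sum_{j=0}^{m-1} (-1)^j e^{ij\pi/m}\bigr)$. The $a$-dependent factor $e^{i\phi}$ pulls out of the sum, so it suffices to show the inner sum is zero; in particular the hypothesis $0 \le a < k$ plays no role beyond fixing the index range of the original statement.

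Next I would absorb the sign $(-1)^j = e^{ij\pi}$ into the exponential, rewriting the inner sum as a geometric series $\sum_{j=0}^{m-1} r^j$ with common ratio $r = e^{i\pi(m+1)/m}$. Using the closed form $\sum_{j=0}^{m-1} r^j = (r^m - 1)/(r - 1)$, the numerator is
\[
r^m - 1 = e^{i\pi(m+1)} - 1 = (-1)^{m+1} - 1,
\]
which equals $0$ precisely because $m$ is odd, so that $m + 1$ is even.

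The one point requiring care --- and the crux of the argument --- is that the denominator $r - 1$ is nonzero, so that the geometric sum is genuinely $0/(r-1) = 0$ rather than an indeterminate $0/0$. I would verify $r \neq 1$ by noting that $r = 1$ would force $(m+1)/m$ to be an even integer, which is impossible since $m > 1$ implies $m \nmid m + 1$; this is exactly where the hypothesis $m > 1$ enters. With $r^m = 1$ and $r \neq 1$ established, the inner sum vanishes, and multiplying by $e^{i\phi}$ and taking the real part yields the claimed identity. The only genuine subtlety is the bookkeeping of the parity of $m$ together with the $m > 1$ condition; the remainder is a routine geometric-series evaluation.
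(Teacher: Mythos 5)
Your proof is correct: the reduction to $\mathrm{Re}\bigl(e^{i\phi}\sum_{j=0}^{m-1}r^j\bigr)$ with $r=e^{i\pi(m+1)/m}$, the observation that $r^m=(-1)^{m+1}=1$ because $m$ is odd, and the check that $r\neq 1$ because $m>1$ together give a complete argument, and you are right that the hypothesis $0\le a<k$ is not actually needed for the identity. The paper does not prove this lemma itself --- it defers the derivation to the reference [vB19] --- so there is no in-paper proof to compare against; your geometric-series evaluation is the standard self-contained route, and the one point you flag as the crux (ruling out the degenerate $0/0$ case of the geometric sum) is indeed the only place where the hypotheses on $m$ enter.
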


We are now able to derive our main results, relating pretty good state transfer on paths to the eigenvalue support of the initial state.

\begin{theorem} \label{thm-odd}
Suppose $m = 2^t p^s$, where $p$ is an odd prime and $s, t \in \N$, and let $\ket{\mathbf{v}}$ be a parity state of $P_{m - 1}$.  For $1 \le c < m / p$, let
\[
S_c := \{\theta_{c + j m / p} : 0 \le j < p\}.
\]
Moreover, let $S_0 := \{\theta_{m / 2}\}$.  Then in $P_{m - 1}$ there is pretty good state transfer between states $\ket{\mathbf{v}}$ and $\ket{\mathbf{v}^\sigma}$ if and only if there does not exist $S_c$ with $c$ odd and $S_{c'}$ with $c'$ even such that $S_c \cup S_{c'} \subseteq \esupp{\ket{\mathbf{v}}}$.
\end{theorem}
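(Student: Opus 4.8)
The plan is to pass to the arithmetic criterion of Theorem~\ref{thm:KT-path} and then understand, as explicitly as possible, the lattice of integer linear relations among the eigenvalues $\theta_j = 2\cos(j\pi/m)$ that lie in $\esupp{\ket{\mathbf v}}$. Writing $\zeta_j = (1+(-1)^j)/2$, so that $\sum_j \ell_j \zeta_j = \sum_{j\text{ even}} \ell_j$, Theorem~\ref{thm:KT-path} says that pretty good state transfer between $\ket{\mathbf v}$ and $\ket{\mathbf v^\sigma}$ fails exactly when there is an integer relation $\{\ell_j\}$ supported on $\esupp{\ket{\mathbf v}}$ with $\sum_j \ell_j \theta_j = 0$ for which the weight $N(\ell) := \sum_j \ell_j$ vanishes while $W(\ell) := \sum_j \ell_j \zeta_j$ is odd. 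Two structural inputs are available from the outset: Lemma~\ref{lem:parity} shows that $\esupp{\ket{\mathbf v}}$ is invariant under $j \mapsto m-j$, and since $\theta_{m-j} = -\theta_j$ this reflection already produces relations $\theta_j + \theta_{m-j} = 0$ entirely within the support. I would track every relation through the pair $(N, W \bmod 2)$ and show that $(0, 1)$ is attainable on the support if and only if the forbidden configuration of $S_c$'s occurs.

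First I would produce the essential relations. Applying Lemma~\ref{lem:trig} with the odd prime $p$ as its modulus and $k = m/p$ gives, for each $1 \le c < m/p$,
\[
R_c : \qquad \sum_{j=0}^{p-1} (-1)^j \theta_{c + jm/p} = 0,
\]
a relation supported precisely on $S_c$. A short computation gives $N(R_c) = 1$ in every case, while $W(R_c)$ is governed by the parity of $m/p$: when $t \ge 1$ one finds $W(R_c) \equiv [\,c \text{ even}\,] \pmod 2$, and when $t = 0$ one finds $W(R_c) \equiv (p+1)/2$ for even $c$ and $W(R_c) \equiv (p-1)/2$ for odd $c$. In either regime $W(R_c)$ takes opposite parities according as $c$ is even or odd. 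This settles the forward implication at once: if $S_c \subseteq \esupp{\ket{\mathbf v}}$ with $c$ odd and $S_{c'} \subseteq \esupp{\ket{\mathbf v}}$ with $c'$ even, then $S_c$ and $S_{c'}$ are disjoint, the combination $R_{c'} - R_c$ lies on the support, and it has $N = 0$ with $W$ odd; by Theorem~\ref{thm:KT-path} there is no pretty good state transfer.

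For the converse I would argue that on a symmetric support every relation decomposes, over $\mathds{Z}$, as a combination of the reflection relations $\theta_j + \theta_{m-j} = 0$ and of the $R_c$ with $S_c \subseteq \esupp{\ket{\mathbf v}}$ (together with the free relation coming from the zero eigenvalue, discussed below). The engine here is Lemma~\ref{lem:cyclo}: its formula expresses the reflection-antisymmetric part $\ell_j - \ell_{m-j}$ of any relation in terms of the $m/p - 1$ boundary differences $\ell_{m - m/p + r} - \ell_{m/p - r}$, and one checks that $R_c$ realizes exactly the boundary pattern supported on $r = c$ and $r = m/p - c$; hence the $R_c$ account for all of the antisymmetric part while the reflection relations account for the symmetric part. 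Granting this, the invariants are additive over the decomposition, and I would finish with the parity count: every reflection relation has even weight $N$ and satisfies $W \equiv 0 \pmod 2$ when $t \ge 1$ and $W \equiv N/2 \pmod 2$ when $t = 0$, so a reflection combination with $N = 0$ always has $W$ even. Since by hypothesis the available $R_c$ all share a single parity of $c$, combining this with the values of $(N, W)$ for the $R_c$ shows that any relation on the support with $N = 0$ must have $W$ even; hence no witness exists and pretty good state transfer occurs.

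The main obstacle is the zero eigenvalue $\theta_{m/2}$, which is exactly what the singleton $S_0$ is meant to record, and which forces the careful bookkeeping that the rest of the argument rests on. Because $\theta_{m/2} = 0$, its coefficient in any relation is free, contributing $(N, W) = (1, \zeta_{m/2})$ with $\zeta_{m/2} = [\,m/2 \text{ even}\,]$; moreover its home class $S_{m/(2p)}$ is degenerate, since the outer terms of $R_{m/(2p)}$ are reflection partners carrying equal signs, so that $R_{m/(2p)}$ collapses into a combination of reflection relations rather than an essential one. Establishing the generation claim for the converse therefore requires separating these degenerate and exceptional contributions---including the rational eigenvalues $\theta_{km/p} = 2\cos(k\pi/p)$ of residue $0$, whose relations must likewise be shown to reduce to reflections---and verifying that the parity contributed by $\theta_{m/2}$ is consistent across the cases $t = 0$, $t = 1$, and $t \ge 2$. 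This is the most delicate step, and it is where Lemma~\ref{lem:cyclo} must be pushed hardest.
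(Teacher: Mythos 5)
Your forward direction coincides with the paper's: the relation $R_c$ you extract from Lemma~\ref{lem:trig} is exactly the integer vector $\{\ell_k\}$ the paper writes down, and the computation $N(R_{c'}-R_c)=0$ with $W(R_{c'}-R_c)$ odd is correct for $1\le c,c'<m/p$. (Note that for this theorem $m$ is even, since $S_0=\{\theta_{m/2}\}$ presupposes it, so only your ``$t\ge1$'' parity computation is relevant; the odd-$m$ regime is the subject of the following theorem.) However, you defer the case $c'=0$ to your closing paragraph and never resolve it, and it does require its own argument: the only relation supported on $S_0$ is $\ell_{m/2}\theta_{m/2}=0$, contributing $(N,W)=(\ell_{m/2},\ell_{m/2}\zeta_{m/2})$, so whether $R_c-R_0$ is a witness hinges on the parity of $m/2=2^{t-1}p^s$, which changes at $t=1$. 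As it stands the forward implication is incomplete.

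The genuine gap is in the converse. Everything there rests on the claim that every integer relation supported on $\esupp{\ket{\mathbf v}}$ decomposes over $\mathds{Z}$ into reflection relations $\theta_j+\theta_{m-j}=0$ together with those $R_c$ for which $S_c\subseteq\esupp{\ket{\mathbf v}}$ (plus the free coefficient at $\theta_{m/2}$). You write ``granting this'' and then concede in the final paragraph that establishing it is the most delicate step; it is never proved, and it does not follow immediately from Lemma~\ref{lem:cyclo} --- one must rule out $\mathds{Z}$-combinations of $R_c$'s with $S_c\not\subseteq\esupp{\ket{\mathbf v}}$ whose sum nevertheless lands inside the support, and separately dispose of the residue-zero eigenvalues $\theta_{km/p}$. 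The paper never needs this generation statement. Its converse uses only a much weaker consequence of Lemma~\ref{lem:cyclo}: if $S_c\not\subseteq\esupp{\ket{\mathbf v}}$ for some $c\neq0$, then by Lemma~\ref{lem:parity} a reflection-symmetric pair of coefficients in the class of $c$ must vanish in any relation on the support, which forces the boundary difference $\ell_{m-m/p+r_c}-\ell_{m/p-r_c}$ in Lemma~\ref{lem:cyclo} to be zero and hence $\ell_j=\ell_{m-j}$ for every $j\equiv c\pmod{m/p}$. Summing over the residue classes of one fixed parity then yields the conclusion directly ($\sum_{j\ \mathrm{odd}}\ell_j$ even in one case, $W$ identically even in the other), with no description of the relation lattice required. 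To complete your proof you would either have to actually prove the generation claim (including the $S_0$ and residue-zero bookkeeping you flag) or replace it with this local argument.
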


\begin{proof}
First, suppose there exist $S_c$ with $c$ odd and $S_{c'}$ with $c'$ even such that $S_c \cup S_{c'} \subseteq \esupp{\ket{\mathbf{v}}}$.  Consider the set of integers $\{\ell_k\}$ given by
\[
\ell_k = \begin{cases}
1, & \mbox{if } k \equiv c, c' + m / p \pmod{2m / p}, c' \neq 0; \\
-1, & \mbox{if } k \equiv c', c + m / p \pmod{2m / p}, c' \neq 0 \mbox{ or } k = m / 2, c' = 0; \\
0, & \mbox{otherwise}.
\end{cases}
\]
For $S_0$, we have that $\theta_{m / 2} = 0$.  Otherwise, by Lemma~\ref{lem:trig}, we have that
\[
\sum_{j = 0}^{p - 1} (-1)^j \theta_{c + jm / p} = \sum_{j = 0}^{p - 1} (-1)^j \cos \left( \frac{(c + jm / p) \pi}{m} \right) = 0,
\]
and so $\sum_k \ell_k \theta_k = 0$.  Moreover, we can verify that $\sum_k \ell_k \zeta_k$ is odd and $\sum_k \ell_k = 0$.  Hence, by Theorem~\ref{thm:KT-path}, we cannot have pretty good state transfer between $\ket{\mathbf{v}}$ and $\ket{\mathbf{v}^\sigma}$.

Now, suppose we do not have $S_c$ with $c$ odd and $S_{c'}$ with $c'$ even such that $S_c \cup S_{c'} \subseteq \esupp{\ket{\mathbf{v}}}$.  Then $S_c \nsubseteq \esupp{\ket{\mathbf{v}}}$ for all odd $c$ or $S_c \nsubseteq \esupp{\ket{\mathbf{v}}}$ for all even $c$.  Consider $S_c \nsubseteq \esupp{\ket{\mathbf{v}}}$, $c \neq 0$.  Then there exists a $j_c$ such that $\theta_{c + j_c m / p} \notin \esupp{\ket{\mathbf{v}}}$, and by Lemma~\ref{lem:parity}, we have that $\theta_{m - c - j_c m / p} \notin \esupp{\ket{\mathbf{v}}}$.  So, in any linear combination, we assume $\ell_{c + j_c m / p} = \ell_{m - c - j_c m / p} \notin \esupp{\ket{\mathbf{v}}}$.  Therefore, letting $r_c \equiv c \pmod{m / p}$, $0 \le r_c < m / p$, we have by Lemma~\ref{lem:cyclo} that $\ell_{m / p - r_c} = \ell_{m - m / p + r_c}$, and hence $\ell_j = \ell_{m - j}$ for every $j \equiv c \pmod{m / p}$.

Now, we first suppose $S_c \nsubseteq \esupp{\ket{\mathbf{v}}}$ for all odd $c$.  Then it follows that $\ell_j = \ell_{m - j}$ for all odd $j$.  Now suppose there is a set of integers $\{\ell_j\}$ such that $\sum_{j} \ell_j \theta_j = 0$ and $\sum_{j} \ell_j \zeta_j$ is odd.  Then since the sum of the $\ell_j$'s for $j$ odd is even, it follows that $\sum_{j} \ell_j \neq 0$.  Hence, by Theorem~\ref{thm:KT-path}, there is pretty good state transfer between $\ket{\mathbf{v}}$ and $\ket{\mathbf{v}^\sigma}$.

Next, we suppose $S_c \nsubseteq \esupp{\ket{\mathbf{v}}}$ for all even $c$.  Then it follows, together with Lemma~\ref{lem:cyclo}, that $\ell_j = \ell_{m - j}$ for all even $j$ and $\ell_{m / 2} = 0$.  Hence $\sum_{j} \ell_j \zeta_j$ is never odd.  Hence, by Theorem~\ref{thm:KT-path}, there is pretty good state transfer between $\ket{\mathbf{v}}$ and $\ket{\mathbf{v}^\sigma}$.
\end{proof}

\begin{theorem}
Suppose $m = p^s$, where $p$ is an odd prime and $s \in \N$, and let $\ket{\mathbf{v}}$ be a parity state of $P_{m - 1}$.  For $1 \le c < m / (2p)$, let
\[
R_c := \{\theta_{c + jm / p} : 0 \le j < p\} \cup \{\theta_{m / p - c + j m / p} : 0 \le j < p\}.
\]
Then in $P_{m - 1}$, there is pretty good state transfer between states $\ket{\mathbf{v}}$ and $\ket{\mathbf{v}^\sigma}$ if and only if there does not exist $R_c$ such that $R_c \subseteq \esupp{\ket{\mathbf{v}}}$.
\end{theorem}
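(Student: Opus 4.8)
The plan is to mirror the proof of Theorem~\ref{thm-odd}, splitting the biconditional into two directions and using Theorem~\ref{thm:KT-path} as the bridge between pretty good state transfer and the arithmetic of the eigenvalue support. The essential structural difference from Theorem~\ref{thm-odd} is that here $m = p^s$ is odd, so $m/p = p^{s-1}$ is odd; consequently the indices inside a single coset $S_c = \{\theta_{c+jm/p} : 0 \le j < p\}$ alternate in parity as $j$ increases, rather than sharing a common parity as they do when $m$ is even. This is exactly why the obstruction must be packaged into the mirror-closed set $R_c = S_c \cup S_{m/p-c}$ rather than into a pair $(S_c, S_{c'})$.

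For the forward direction (no transfer when some $R_c \subseteq \esupp{\ket{\mathbf v}}$), I would exhibit an explicit integer relation violating the Kronecker condition of Theorem~\ref{thm:KT-path}. Applying Lemma~\ref{lem:trig} with the substitutions $n \mapsto m$, $k \mapsto m/p$, $m \mapsto p$ and base points $a = c$ and $a = m/p - c$ yields $\sum_{j=0}^{p-1}(-1)^j \theta_{c+jm/p} = 0$ and $\sum_{j=0}^{p-1}(-1)^j \theta_{m/p - c + jm/p} = 0$. I would then set $\ell_{c+jm/p} = (-1)^j$, $\ell_{m/p-c+jm/p} = (-1)^{j+1}$, and all other $\ell$ to zero. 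Subtracting the two relations gives $\sum_k \ell_k\theta_k = 0$, and since $\sum_{j=0}^{p-1}(-1)^j = 1$ (as $p$ is odd) we also get $\sum_k \ell_k = 0$. The one computation to carry out is $\sum_k \ell_k\zeta_k$: using $\zeta_k = 1$ iff $k$ is even together with the fact that $m/p$ is odd (so the parity of $c + jm/p$ equals that of $c+j$), a short index count in each coset gives $\sum_k \ell_k \zeta_k = (-1)^c p$, which is odd. Thus Theorem~\ref{thm:KT-path} forbids transfer.

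For the converse I would first record that Lemma~\ref{lem:parity} makes $\esupp{\ket{\mathbf v}}$ symmetric under the mirror $j \mapsto m-j$, which sends $S_c$ to $S_{m/p-c}$; hence $R_c \subseteq \esupp{\ket{\mathbf v}}$ is equivalent to $S_c \subseteq \esupp{\ket{\mathbf v}}$, and (since the pairs $\{c, m/p-c\}$ exhaust the nonzero residues modulo $m/p$ as $c$ ranges over $1 \le c < m/(2p)$) the hypothesis ``no $R_c \subseteq \esupp{\ket{\mathbf v}}$'' says every nonzero residue class modulo $m/p$ contains a missing eigenvalue. Fix any relation $\sum_j \ell_j \theta_j = 0$, extended by $\ell_j = 0$ for $\theta_j \notin \esupp{\ket{\mathbf v}}$ so that Lemma~\ref{lem:cyclo} applies across all of $1,\dots,m-1$. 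For a missing index in residue class $c$, both it and (by Lemma~\ref{lem:parity}) its mirror vanish, and feeding this into Lemma~\ref{lem:cyclo} forces the correction term $\ell_{m-m/p+r}-\ell_{m/p-r}$ attached to that residue to be zero; the $(-1)^q$ factor then propagates $\ell_j = \ell_{m-j}$ to every index of the class. Since this holds for every nonzero residue, and the $r=0$ case of Lemma~\ref{lem:cyclo} gives $\ell_j = \ell_{m-j}$ on the zero residue directly, we conclude $\ell_j = \ell_{m-j}$ for all $j$.

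The proof then closes with a parity count that is clean precisely because $m$ is odd: each mirror pair $\{j, m-j\}$ contains one even and one odd index, so $\ell_j = \ell_{m-j}$ gives $\sum_j \ell_j = 2\sum_{j \text{ even}} \ell_j = 2\sum_j \ell_j \zeta_j$. Hence if $\sum_j \ell_j\zeta_j$ is odd then $\sum_j \ell_j$ is twice an odd number and in particular nonzero, so the condition of Theorem~\ref{thm:KT-path} is satisfied and pretty good state transfer occurs. I expect the main obstacle to be the bookkeeping in the converse: verifying that a single missing eigenvalue in each residue class, together with its mirror, really does annihilate the correct correction term of Lemma~\ref{lem:cyclo} for every index of the class (including deciding whether the missing index lies in the range $1 \le j \le m - m/p$ or must instead be accessed through its mirror), so that $\ell_j = \ell_{m-j}$ propagates uniformly. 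By contrast, the necessity computation $\sum_k \ell_k \zeta_k = (-1)^c p$ and the final factor-of-two step are routine once the oddness of $m$ is used correctly.
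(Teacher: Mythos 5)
Your proposal is correct and follows essentially the same route as the paper: the same $\pm 1$ relation built from Lemma~\ref{lem:trig} on the two cosets of $R_c$ for necessity, and the same use of Lemma~\ref{lem:parity} plus the correction term of Lemma~\ref{lem:cyclo} to force $\ell_j = \ell_{m-j}$ for all $j$, whence $\sum_j \ell_j = 2\sum_j \ell_j \zeta_j \equiv 2 \pmod 4$ in the converse. You in fact supply details the paper compresses, notably the explicit computation $\sum_k \ell_k \zeta_k = (-1)^c p$ behind its ``we can verify.''
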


\begin{proof}
First, suppose there exists $R_c$ such that $R_c \subseteq \esupp{\ket{\mathbf{v}}}$.  Consider the set of integers $\{\ell_j\}$ given by
\[
\ell_k = \begin{cases}
1, & \mbox{if } k \equiv c, 2m / p - c \pmod{2m / p}; \\
-1, & \mbox{if } k \equiv c + m / p, m / p - c \pmod{2m / p}; \\
0, & \mbox{otherwise}.
\end{cases}
\]
By Lemma~\ref{lem:trig}, we have that
\begin{align*}
\sum_{j = 0}^{m / p - 1} (-1)^j \theta_{c + j m / p} &= \sum_{j = 0}^{m / p - 1} \cos \left( \frac{(c + jm / p) \pi}{m} \right) = 0, \\
\sum_{j = 0}^{m / p - 1} (-1)^j \theta_{(j + 1) m / p - c} &= \sum_{j = 0}^{m / p - 1} \cos \left( \frac{((j + 1) m / p - c) \pi}{m} \right) = 0,
\end{align*}
and so $\sum_k \ell_k \theta_k = 0$.  Moreover, we can verify that $\sum_k \ell_k \zeta_k$ is odd and $\sum_k \ell_k = 0$.  Hence, by Theorem~\ref{thm:KT-path}, we cannot have pretty good state transfer between $\ket{\mathbf{v}}$ and $\ket{\mathbf{v}^\sigma}$.

Now suppose there does not exist $R_c$ such that $R_c \subseteq \esupp{\ket{\mathbf{v}}}$.  Then for each $c$, there exists a $c'$ such that $\theta_{c'} \in R_c \setminus \esupp{\ket{\mathbf{v}}}$, and by Lemma~\ref{lem:parity}, we have that $\theta_{m - c'} \in R_c \setminus \esupp{\ket{\mathbf{v}}}$.  So, in any linear combination, we assume $\ell_{c'} = \ell_{m - c'} = 0$.  By Lemma~\ref{lem:cyclo}, we have that $\ell_j = \ell_{m - j}$ for every $\theta_k \in R_c$.  It follows, together with Lemma~\ref{lem:cyclo}, that $\ell_j = \ell_{m - j}$ for every $j$.  Now suppose there is a set of integers $\{\ell_j\}$ such that $\sum_j \ell_j \theta_j = 0$ and $\sum_j \ell_j \zeta_j$ is odd.  Then it follows $\sum_j \ell_j \equiv 2 \pmod{4}$, and in particular, is not zero.  Hence, by Theorem~\ref{thm:KT-path}, there is pretty good state transfer between $\ket{\mathbf{v}}$ and $\ket{\mathbf{v}^\sigma}$.
\end{proof}

As a consequence, we demonstrate pretty good state transfer in the following specific family.

\begin{cor}
Given any odd prime $p$ and positive integer $t \ge 2$, there is pretty good state transfer in $P_{2^t p - 1}$ between states $\ket{\mathbf{v}} = \frac{1}{\sqrt{2}} (\ket{a} + \alpha \ket{b})$ and $\ket{\mathbf{v}^\sigma} = \frac{1}{\sqrt{2}} (\ket{2^t p - a} + \alpha \ket{2^t p - b}$ whenever $a \neq b$, $\alpha = \pm 1$, and $a + \alpha b \equiv 0 \pmod{2^t}$.
\end{cor}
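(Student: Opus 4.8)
The plan is to deduce the result directly from Theorem~\ref{thm-odd}, applied with $m = 2^t p$ (so $s = 1$), by showing that the ``even'' family of sets $S_{c'}$ can never be contained in the eigenvalue support; this makes the obstruction in Theorem~\ref{thm-odd} impossible and forces pretty good state transfer.

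First I would verify the hypotheses. Since $t \ge 2$, the congruence $a + \alpha b \equiv 0 \pmod{2^t}$ forces $a + \alpha b$ to be even, and because $\alpha = \pm 1$ this gives $a \equiv b \pmod 2$; hence $\ket{\mathbf{v}} = \frac{1}{\sqrt 2}(\ket a + \alpha \ket b)$ is a parity state and Theorem~\ref{thm-odd} applies. I would then record the support criterion coming from the idempotent formula with $n + 1 = m$: writing $f(j) := \sin\frac{aj\pi}{m} + \alpha \sin\frac{bj\pi}{m}$, we have $\theta_j \in \esupp{\ket{\mathbf{v}}}$ if and only if $f(j) \neq 0$.

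The heart of the argument is to exhibit, for every even $c$, a single eigenvalue of $S_c$ lying outside $\esupp{\ket{\mathbf{v}}}$. Because $\gcd(2^t, p) = 1$, as $j$ runs over $\{0, \dots, p-1\}$ the indices $c + j\,2^t$ realize every residue modulo $p$, so each $S_c$ with $1 \le c < 2^t$ contains exactly one multiple of $p$, namely $j_c^\ast = pk$ where $k \equiv c\,p^{-1} \pmod{2^t}$ and $1 \le k < 2^t$. Evaluating there collapses things: $f(pk) = \sin\frac{ak\pi}{2^t} + \alpha \sin\frac{bk\pi}{2^t}$, and a product-to-sum expansion isolates the factor $\sin\frac{(a + \alpha b)k\pi}{2^{t+1}} = \sin\frac{Wk\pi}{2}$, where $W := (a + \alpha b)/2^t$ is an integer by hypothesis. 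When $c$ is even the residue $k \equiv c\,p^{-1}$ is even (as any unit $p^{-1}$ modulo $2^t$ is odd), so $Wk$ is even and this factor vanishes; thus $\theta_{j_c^\ast} \notin \esupp{\ket{\mathbf{v}}}$ and $S_c \nsubseteq \esupp{\ket{\mathbf{v}}}$. The remaining even set $S_0 = \{\theta_{m/2}\}$ is excluded automatically, since Lemma~\ref{lem:parity} with $j = m/2$ (which is self-paired under $j \mapsto m - j$) forces $\theta_{m/2} \notin \esupp{\ket{\mathbf{v}}}$.

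Having shown that no $S_{c'}$ with $c'$ even is contained in $\esupp{\ket{\mathbf{v}}}$, the forbidden configuration of Theorem~\ref{thm-odd} (an odd $S_c$ together with an even $S_{c'}$, both inside the support) cannot arise, so pretty good state transfer between $\ket{\mathbf{v}}$ and $\ket{\mathbf{v}^\sigma}$ follows. I expect the main obstacle to be the bookkeeping in the middle step: correctly locating the unique multiple of $p$ inside each $S_c$, tracking the parity of the associated $k$, and treating $\alpha = \pm 1$ uniformly so that in both cases it is exactly the $(a + \alpha b)$-factor produced by the product-to-sum identity that the hypothesis $2^t \mid a + \alpha b$ annihilates; the rest is routine.
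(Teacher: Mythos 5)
Your proposal is correct and follows essentially the same route as the paper: apply Theorem~\ref{thm-odd} with $m = 2^t p$ after showing that every even $S_c$ misses the support, which you do by locating the multiple-of-$p$ index $pk$ (with $k$ even) in each such set and using the product-to-sum identity together with $2^t \mid a + \alpha b$ to kill the corresponding idempotent---exactly the paper's computation that $\theta_{2pj} \notin \esupp{\ket{\mathbf{v}}}$ for $1 \le j < 2^{t-1}$. The only cosmetic differences are your explicit verification of the parity-state hypothesis and your handling of $S_0$ via the self-pairing in Lemma~\ref{lem:parity} rather than the direct evaluation at $j = 2^{t-1}p$.
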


\begin{proof}
We consider the eigenvalue support of $\ket{\mathbf{v}}$.  In particular, we show that $\theta_{2p j} \notin \esupp{\ket{\mathbf{v}}}$ for $1 \le j < 2^{t - 1}$.  We have
\begin{align*}
\bra{x} E_{2p j} \ket{\mathbf{v}} &= \frac{1}{\sqrt{2}} \left( \bra{x} E_{2p j} \ket{a} + \alpha \bra{x} E_{2p j} \ket{b} \right) \\
&= \frac{1}{2^{t - 1} p \sqrt{2}} \sin \left( \frac{j x \pi}{2^{t - 1}} \right) \left( \sin \left( \frac{j a \pi}{2^{t - 1}} \right) + \alpha \sin \left( \frac{j b \pi}{2^{t - 1}} \right) \right) \\
&= \frac{1}{2^{t - 2} p \sqrt{2}} \sin \left( \frac{j \pi x}{2^{t - 1}} \right) \sin \left( \frac{ j (a + \alpha b) \pi}{2^t} \right) \cos \left( \frac{j (a - \alpha b) \pi}{2^t} \right) = 0,
\end{align*}
since $a + \alpha b$ is a multiple of $2^t$.  We observe that $2p$ generates the subgroup $\{0, 2, 4, \ldots, 2^t - 2\}$ of $\mathds{Z}_{2^t}$.  Moreover, we have shown that $\theta_{2^{t - 1} p} \notin \esupp{\ket{\mathbf{v}}}$.  Hence, for every even $c$, we have that $S_c \nsubseteq \esupp{\ket{\mathbf{v}}}$, and so by Theorem~\ref{thm-odd}, there is pretty good state transfer between $\ket{\mathbf{v}}$ and $\ket{\mathbf{v}^\sigma}$.
\end{proof}

\section{Concluding Remarks}

Relaxing the requirement of pretty good state transfer on a path to be with respect to a specific state rather than any given state provides a more rich array of possibilities for state transfer than the single state counterpart.  We have described an explicit family of examples involving two qubits, and characterized two families of paths in terms of the eigenvalue support of the initial state.

We would like to complete this characterization of pretty good state transfer of multiple qubit states in terms of the eigenvalue support.  Two of the challenges to overcome are working with the cyclotomic polynomial that results for paths not in this family, and losing the symmetry of the eigenvalue support present for parity states.

Finally, we have focused almost exclusively on transfer to the mirrored state, but other forms of state transfer are also of interest.  Another variation is that of \emph{fractional revival}, in which we start with a single qubit vertex state and desire transfer to a subset of the qubits, including the initial one; see Chan et al.~\cite{Chan2019a} for recent results.  We would like to consider if there is a general problem that captures both of these versions of state transfer.

\bibliographystyle{plain}

\end{document}